\documentclass[a4paper]{amsart}

\usepackage{graphics,amssymb,enumerate}
\usepackage{hyperref}

\newcommand {\bd} {\begin{displaymath}}
\newcommand {\ed} {\end{displaymath}}
\newcommand {\be} {\begin{equation}}
\newcommand {\ee} {\end{equation}}
\newcommand {\bea} {\begin{eqnarray}}
\newcommand {\eea} {\end{eqnarray}}

\def\C{C^{\infty}(M)}

\newtheorem{lemma}{Lemma}

\newcommand{\so}[2]{\ensuremath{\mathfrak{so}(#1,#2)}}

\newcommand{\tr} {\ensuremath{\text{\upshape tr\,}}}
\newcommand{\Set}[2] {\ensuremath{ \{\,#1\,|\,#2\,\}}}
\newcommand{\gln}[2][n]{\ensuremath{\mathfrak{gl}(#1,#2)}}
\newcommand{\nR}{\ensuremath{\mathbb{R}}}

\newcommand{\csH}{\ensuremath{\mathfrak{h}}}
\newcommand{\Ga}{\ensuremath{\alpha}}
\newcommand{\ii}{{\rm i}}
\newlength{\lengthX}
\newcommand{\nin}  {\not\in}
\newcommand{\Gp}{\ensuremath{\pi}}
\newcommand{\Gl}{\ensuremath{\lambda}}
\newcommand{\Gm}{\ensuremath{\mu}}
\newcommand{\Gn}{\ensuremath{\nu}}
\newcommand{\Gx}{\ensuremath{\chi}}
\newcommand{\Def}  [1]{\textit{#1}\index{#1}}

\newcounter{alist}
\newenvironment{alist}
	{\begin{list}
		{ {\rm (\alph{alist})} }
		{
			\usecounter{alist}
		}
	}
	{\end{list}}

\newtheorem{theorem}{Theorem}
\newtheorem{example}{Example}

\Large

\begin{document}
\title{so(p,q) Toda Systems}

\author{Stelios A.~Charalambides}
\email[Stelios A.~Charalambides]{steliosc@ucy.ac.cy}

\author{Pantelis A.~Damianou}
\email[Pantelis A.~Damianou]{damianou@ucy.ac.cy}

\address{Department of Mathematics and Statistics, University of Cyprus, P.O.~Box 20537, 1678 Nicosia, Cyprus}

\begin{abstract}
 We define an integrable hamiltonian system of Toda type  associated with the real Lie algebra $\so{p}{q}$.  As usual there exists a periodic and a non-periodic version.   We construct, using the root space, two  Lax pair representations  and the associated Poisson tensors. We prove Liouville integrability and examine the multi-hamiltonian structure. The system is a projection of a canonical $A_n$ type Toda lattice via a Flaschka type transformation. It is also obtained via a complex change of variables from the classical Toda lattice.
\end{abstract}

\maketitle

\section{Introduction}
The Toda lattice, is a Hamiltonian system with an exponential potential:

$$
  H(q_1,\dots, q_N,\,p_1,\dots, p_N)=\sum_{i=1}^N \,\frac12\, p_i^2 +\sum _{i=1}^{N-1} \, e^{ q_i-q_{i+1}}.
$$
It describes a system of $N$ particles on a line, connected by exponential springs.
The Toda lattice is a well-known integrable system with soliton solutions.  Under  Flaschka's change of variable ~\cite{flaschka1}

\be
	a_i  = \frac{1}{2} e^{ \frac{1}{2} (q_i - q_{i+1} ) }  \ \ , \quad \quad
	b_i  = -\frac{1}{2} p_i  \ ,  \label{f1}
\ee

the equations become
\be
\begin{array}{lcl}
 \dot a _i& = & a_i \,  (b_{i+1} -b_i )    \\
   \dot b _i &= & 2 \, ( a_i^2 - a_{i-1}^2 ) \ .
\end{array}
\ee

These  equations can be written in  Lax pair  form $\dot L = [B, L] $, where $L$ is the Jacobi matrix
\begin{equation}
L =
\begin{pmatrix}
	b_1 &  a_1 & 0 & \cdots & \cdots & 0 \cr
	a_1 & b_2 & a_2 & \cdots &    & \vdots \cr
	0 & a_2 & b_3 & \ddots &  &  \cr
	\vdots & & \ddots & \ddots & & \vdots \cr
	\vdots & & & \ddots & \ddots & a_{N-1} \cr
	0 & \cdots & & \cdots & a_{N-1} & b_N
\end{pmatrix}
                        \ ,
\end{equation}
and $B$ is the skew-symmetric part of $L$ in the Lie algebra decomposition lower triangular plus skew-symmetric. Due to the Lax pair, it  follows that the functions	$H_i=\frac{1}{i}\, \tr L^i$ are constants of motion.  Moreover, they are in involution with respect to a Poisson structure, associated to this  Lie algebra decomposition.

We note that
\bd
H_1=\sum_{i=1}^N b_i=-\frac{1}{2} (p_1+p_2+ \dots + p_N)  \ ,
\ed
corresponds to the total momentum and
\bd
H_2=  H(q_1, \dots, q_N, \,  p_1, \dots, p_N) = \frac{1}{2} \sum_{i=1}^N b_i^2+  \sum_{i=1}^{N-1} a_i^2
\ed
is the Hamiltonian.

Consider ${\bf R}^{2N} $  with  coordinates $(q_1, \dots , q_N, p_1, \dots, p_N)$, the
standard symplectic bracket $\{\ , \ \}_0$  and the Flaschka' transformation  $F: {\bf R}^{2N} \to {\bf R}^{2N-1}$ defined by
\bd
 F:  (q_1, \dots, q_N, p_1, \dots, p_N) \to (a_1,  \dots, a_{N-1}, b_1, \dots, b_N) \ .
\ed

There exists a bracket on ${\bf R}^{2N-1} $ which satisfies
\bd
\{f, g \} \circ F  = \{ f \circ F, g \circ F \}_0 \ .
\ed

It  is a bracket  which (up to a constant multiple) is given by
\be
\begin{array}{lcl}
\{a_i, b_i \}& =&-a_i  \\
\{a_i, b_{i+1} \} &=& a_i  \label{poisson1}   \ ;
\end{array}
\ee
all other brackets are zero.
$H_1=b_1+b_2 + \dots +b_N$ is the only Casimir.   The Hamiltonian in this bracket is
 $H_2 = \frac{1}{2}\  { \rm tr}\  L^2$.  We also have involution of invariants,  $ \{  H_i, H_j \}=0$.
We denote this bracket by $\pi_1$.

The quadratic Toda bracket,  $\Pi_2$, due to Adler \cite{adler}, can be used to define a bi-hamiltonian formulation of the system.
It is a Poisson bracket in which the Hamiltonian vector field generated by $H_1$ is the same as the Hamiltonian vector field generated by $H_2$ with respect to the $\pi_1$ bracket. The defining relations are
\be \label{Adler_bracket}
\begin{array}{lcl}
\{a_i, a_{i+1} \}&=&\frac{1}{2} a_i a_{i+1} \\
\{a_i, b_i \} &=& -a_i b_i          \\
\{a_i, b_{i+1} \}&=& a_i b_{i+1}  \\
\{b_i, b_{i+1} \}&=& 2\, a_i^2  \ ;
\end{array}
\ee
all other brackets are zero.
This bracket has ${\rm det} \, L$ as Casimir and
	$H_1 =\tr  L$ is the Hamiltonian.
Furthermore, $\Pi_2$ is compatible with $\pi_1$.
We also have the Lenard relations
\be
\Pi_2 \nabla H_l = \pi_1 \nabla H_{l+1} \ .
\ee

The classical Toda lattice was  generalized to produce  a Toda type system for each
simple Lie algebra. The finite, non--periodic Toda lattice corresponds to a root system of type $A_\ell$. This
generalization is due to Bogoyavlensky \cite{bogo}. These systems were studied extensively in \cite{kostant} where
the solution of the system was connected intimately with the representation theory of simple Lie groups.   We call these systems the Bogoyavlensky-Toda lattices.  They can be described as follows.

\smallskip

Let $\mathfrak{g}$ be any simple Lie algebra over the complex numbers. One chooses a Cartan subalgebra $\mathfrak{h}$ of $\mathfrak{g}$, and a basis $\Pi$ of simple roots for the root system $\Delta$ of $\mathfrak{h}$ in $\mathfrak{g}$. The corresponding set of positive roots is denoted by $\Delta^{+}$.  To each positive root $\alpha$ one can associate a triple $(X_\alpha,X_{-\alpha},H_{\alpha})$ of vectors in $\mathfrak{g}$ which generate a Lie subalgebra isomorphic to $sl_2(\mathbf{C})$. The set $(x_\alpha, x_{-\alpha})_{\alpha \in \Delta^+}\cup (h_\alpha)_{\alpha \in \Pi}$ is basis of $\mathfrak{g}$, called a root basis.  To these data one associates the Lax equation $\dot L=[B,L]$, where $L$ and $B$ are defined as follows:
\begin{eqnarray*}
  L&=&\sum_{i=1}^\ell b_i h_{\alpha_i} + \sum_{i=1}^\ell a_i (x_{\alpha_i}+x_{-\alpha_i}),\\
  B&=&\sum_{i=1}^\ell a_i (x_{\alpha_i}-x_{-\alpha_i}).
\end{eqnarray*}
The affine space  of all elements $L$ of $\mathfrak{g}$ of the above form is the phase space of the Bogoyavlensky-Toda
lattice, associated to $\mathfrak{g}$. The functions which yield the integrability of the system are the $Ad$-invariant
functions on $\mathfrak{g}$, restricted to $M$.

 In this paper we imitate this procedure for the case of a real Lie algebra of type $\so{p}{q}$.  The resulting system is Liouville integrable as well. It turns out that this system can be realized as a projection of a classical type Toda system where some of the terms in the potential are switched from positive to negative. We illustrate  with a small example: The potential of the Toda lattice  is exponential
 \bd V(q_1, \dots , q_{N})=\sum _{i=1}^{N-1} \, e^{ q_i-q_{i+1}} \ . \ed
For example, if $N=5$
 \bd V(q_1, q_2, q_3, q_4, q_5)=e^{ q_1-q_{2}}+e^{ q_2-q_{3}}+e^{ q_3-q_{4}}+e^{ q_4-q_{5}} \ . \ed

 We may modify the potential by changing  some of the plus sings to minuses the system remains integrable. There are $16$ total possibilities for the possible $\pm$ signs in front of the exponentials.  For example one may consider the potential
\bd V(q_1, q_2, q_3, q_4, q_5)=e^{ q_1-q_{2}}+e^{ q_2-q_{3}}-e^{ q_3-q_{4}}+e^{ q_4-q_{5}} \  . \ed

 By projecting this system using  Flaschka's  transformation we  obtain a new system in $(a,b)$ variables and we will give a Lax pair for this system. It turns out that the Lax pair is associated with the real Lie algebra $\so{6}{5}$.
More generally we define an integrable hamiltonian system associated with the real Lie algebra $\so{2m}{2n+1}$.
As usual there exists a periodic and a non-periodic version.

We construct, using the root space, a Lax pair representation and the associated Poisson tensors. We prove Liouville integrability and examine the multi-hamiltonian structure.
The system is bi-hamiltonian with a linear and a quadratic Poisson bracket as in the case of the classical Toda lattice.  The quadratic Poisson bracket is similar to the Adler Toda bracket with some differences on the signs. In fact one can obtain this new bracket via a simple complex change of variables. The system is a projection of a canonical $A_n$ type Toda lattice (with some sign changes on the potential) via a Flaschka type transformation. It turns out that this system is equivalent via a complex change of variables with the classical non-periodic Toda lattice.  As we mentioned, there are a total of $2^N$ such systems but in this paper  we just consider one such potential for simplicity.  Note that the number of systems is related to the construction of Tomei \cite{tomei} of the Toda manifold, i.e. the space of real, symmetric, tridiagonal $n\times n$ matrices with fixed eigenvalues.

We begin by giving a description of the basis of $\so{2m}{2n+1}$, its Cartan subalgebra, its roots and its root space. We end--up with a new set of polynomial equations in the variables $(a,b)$. One can write the equations in Lax pair form ($L(t), B(t)$), which can be described in terms of the root system. We also compute the Casimirs associated with both Poisson brackets and prove the involution of the invariants. Finally we show that this system is equivalent via a simple complex change of variables to the classical Toda system. Using this approach we give a second Lax pair involving complex coefficients and a different set of integrals in involution.

\section{Bi-hamiltonian systems and Master Symmetries}
A Poisson bracket  on the space $\C$ of smooth functions on a smooth manifold $M$ is a skew-symmetric, bilinear map,
\begin{equation*}
\{\cdot,\cdot\} : \C \times \C \to \C,
\end{equation*}
that verifies the Jacobi identity and is a biderivation. Thus, $(\C, \{\cdot,\cdot\})$ has the structure of a Lie algebra.
A \emph{Poisson structure} on a smooth manifold $M$ is a Lie algebra structure on $C^{\infty}(M)$ whose the bracket $\{\cdot,\cdot\} : C^{\infty}(M) \times C^{\infty}(M) \to C^{\infty}(M)$ verifies the Leibniz's rule:
\begin{equation*}
\{f,gh\} = \{f,g\}h + g\{f,h\}, \quad \quad \forall \, f,g,h \in C^{\infty}(M).
\end{equation*}
 The Poisson bracket $\{\cdot,\cdot\}$ gives rise to a contravariant antisymmetric tensor field $\pi$  of order $2$ such that $\pi(df,dg) = \{f,g\}$, for $f,g\in C^{\infty}(M)$.  The bivector  $\pi$ is called a \emph{Poisson tensor} and the manifold $(M,\pi)$ a \emph{Poisson manifold}.

A Hamiltonian system is specified by a Poisson bracket $\{\ , \ \}$ together with a $C^{\infty}$ function $H$ called the Hamiltonian.  The vector field has the form
\bd X=\pi dH \ . \ed
A   bi-Hamiltonian system is defined by specifying two Hamiltonian functions $H_1$, $H_2$ satisfying
\begin{equation}
X=\pi_1  \,  d H_2 = \pi_2 \, d  H_1 \ .
\end{equation}

\smallskip
\noindent

If  $\pi_1$  is
symplectic, i.e., the structure matrix $\pi_1$ is invertible,  we call the Poisson pair non-degenerate.  If we assume a non-degenerate pair we make
the following definition: The  recursion operator associated with a non-degenerate pair is the
$(1,1)$-tensor ${\mathcal R}$ defined by
\begin{equation}
{\mathcal R}=\pi_2  \pi_1^{ -1}.
\end{equation}

We have the following result due to Magri \cite{magri}:

\begin{theorem}
 Suppose  that we have a non--degenerate bi-Hamiltonian system on a manifold $M$, whose first
cohomology group is trivial. Then,  there exists a hierarchy of mutually commuting
functions $H_1, H_2, \dots $,  all in involution with respect to both brackets. If we denote by
$\chi_i$ the Hamiltonian vector field generated by $H_i$  with respect to the initial bracket $\pi_1$ then
the $\chi_i$  generate mutually commuting bi-Hamiltonian flows, satisfying the
Lenard recursion relations
\begin{equation}
\chi_{i+j}= \pi_i \,   \nabla H_j  \ ,
\end{equation}
where $\pi_{i}={\mathcal R}^{i-1} \pi_1$ are the higher order Poisson tensors.

\end{theorem}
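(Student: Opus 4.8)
The plan is to generate the hierarchy by repeatedly applying the recursion operator to the differential of $H_1$, and to extract from the compatibility of the pair the two facts that make the scheme work: that every iterate is exact, and that the resulting functions Poisson-commute. First I would record compatibility in the form actually needed. Because $\pi_1+\lambda\,\pi_2$ is a Poisson tensor for every $\lambda$ and $\pi_1$ is non-degenerate, the recursion operator $\mathcal{R}=\pi_2\,\pi_1^{-1}$ is a well-defined $(1,1)$-tensor whose Nijenhuis torsion vanishes,
\[
T_{\mathcal{R}}(X,Y)=[\mathcal{R}X,\mathcal{R}Y]-\mathcal{R}\,[\mathcal{R}X,Y]-\mathcal{R}\,[X,\mathcal{R}Y]+\mathcal{R}^{2}\,[X,Y]=0.
\]
This identity is the engine of the whole argument.

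The key lemma I would establish is that the adjoint operator $\mathcal{R}^{*}=\pi_1^{-1}\pi_2$ maps closed $1$-forms to closed $1$-forms; this is precisely the differential-form reading of $T_{\mathcal{R}}=0$. Granting it, I start the Lenard chain from $\alpha_1=\nabla H_1$ and define $\alpha_{i+1}=\mathcal{R}^{*}\alpha_i$. Each $\alpha_i$ is then closed, and since the first cohomology group of $M$ is trivial every closed form is exact, so $\alpha_i=\nabla H_i$ for a function $H_i$ (unique up to a constant), with $H_2$ matching the given second Hamiltonian. By construction $\pi_1\,\nabla H_{i+1}=\pi_2\,\nabla H_i$, so the vector fields $\chi_i=\pi_1\,\nabla H_i$ satisfy $\mathcal{R}\,\chi_i=\chi_{i+1}$; in particular each $\chi_i=\pi_2\,\nabla H_{i-1}$ is bi-Hamiltonian. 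An induction on this relation, together with the fact that every $\pi_i=\mathcal{R}^{\,i-1}\pi_1$ is again a Poisson tensor compatible with the others (another consequence of $T_{\mathcal{R}}=0$), then yields the Lenard recursion $\chi_{i+j}=\pi_i\,\nabla H_j$ of the statement.

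It remains to prove involution, for which the clean device is a shift identity obtained by applying skew-symmetry and the Lenard relation $\pi_1\,\nabla H_k=\pi_2\,\nabla H_{k-1}$:
\[
\{H_i,H_j\}_1=\langle\nabla H_i,\pi_2\,\nabla H_{j-1}\rangle=\{H_i,H_{j-1}\}_2=\{H_{i+1},H_{j-1}\}_1.
\]
Thus the bracket depends only on $i+j$; iterating the shift until the two indices coincide, or become adjacent, and invoking antisymmetry forces $\{H_i,H_j\}_1=0$, and the same computation with $\pi_1$ and $\pi_2$ interchanged gives $\{H_i,H_j\}_2=0$. Involution then upgrades to commutation of the flows: since the assignment $f\mapsto\chi_f=\pi_1\,\nabla f$ is, up to sign, a Lie-algebra homomorphism, $[\chi_i,\chi_j]=\pm\,\pi_1\,\nabla\{H_i,H_j\}_1=0$, so the $\chi_i$ pairwise commute.

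I expect the main obstacle to be the key lemma that $\mathcal{R}^{*}$ preserves closedness, since this is where the geometric content of compatibility is spent; once $T_{\mathcal{R}}=0$ has been translated into this statement about $1$-forms, the construction of the $H_i$, their involution, and the commutation of the flows are essentially bookkeeping. The cohomological hypothesis enters at exactly one indispensable point, namely in upgrading the closed forms $\alpha_i$ to exact ones, and hence to differentials of globally defined Hamiltonians.
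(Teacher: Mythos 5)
The paper itself offers no proof of this theorem: it is quoted as a known result and attributed to Magri \cite{magri}, so there is no internal argument to compare yours against. Your proposal follows the standard Poisson--Nijenhuis route, and its overall architecture --- torsion-free recursion operator, closed-to-exact via the triviality of $H^1(M)$, a telescoping shift identity for involution, and commutation of the flows because $f\mapsto\chi_f$ is (up to sign) a Lie-algebra homomorphism --- is the right one.

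There is, however, a genuine gap exactly where you locate the main difficulty. The ``key lemma'' as you state it --- that $\mathcal{R}^{*}=\pi_1^{-1}\pi_2$ maps \emph{every} closed $1$-form to a closed $1$-form, and that this is precisely the $1$-form reading of $T_{\mathcal{R}}=0$ --- is false. Vanishing Nijenhuis torsion yields only the weaker, two-step implication: if $\alpha$ and $\mathcal{R}^{*}\alpha$ are \emph{both} closed, then $\mathcal{R}^{*}(\mathcal{R}^{*}\alpha)$ is closed; this comes from the identity expressing $\langle\alpha,T_{\mathcal{R}}(X,Y)\rangle$ in terms of $d(\mathcal{R}^{*2}\alpha)$, $d(\mathcal{R}^{*}\alpha)$ and $d\alpha$. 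A counterexample to your stronger claim, satisfying all the hypotheses of the theorem's setting: on $\R^2$ take $\pi_1=\partial_x\wedge\partial_y$ and $\pi_2=x\,\partial_x\wedge\partial_y$; these are compatible, $\pi_1$ is symplectic, $\mathcal{R}=x\,\mathrm{Id}$ has vanishing torsion, yet $d(\mathcal{R}^{*}df)=dx\wedge df\neq 0$ for generic $f$ (e.g.\ $f=y$). Consequently your induction cannot start from $\alpha_1=\nabla H_1$ alone; it must be seeded with the \emph{two} exact forms $dH_1$ and $dH_2=\mathcal{R}^{*}dH_1$, which is exactly what the bi-Hamiltonian hypothesis $\pi_1\,dH_2=\pi_2\,dH_1$ supplies, and then proceed by the two-step lemma. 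With that correction the remainder of your argument (exactness from $H^1(M)=0$, the shift $\{H_i,H_j\}_1=\{H_{i+1},H_{j-1}\}_1$ forcing all brackets to vanish, commutation of the $\chi_i$) goes through. A minor indexing remark: with $\pi_i=\mathcal{R}^{i-1}\pi_1$ and $\chi_i=\pi_1\nabla H_i$ one in fact obtains $\pi_i\nabla H_j=\chi_{i+j-1}$; the printed relation $\chi_{i+j}=\pi_i\nabla H_j$ presumes the hierarchy is labelled from $H_0$, an off-by-one in the paper's statement rather than a defect of your argument.
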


The definition and basic properties of master symmetries can be found in Fuchssteiner \cite{fuchssteiner}.
Consider a differential equation on a manifold $M$ defined by a vector field $\Gx$.

A vector field $Z$ is a \Def{symmetry} of the equation if
\begin{equation}
	[Z, \chi]=0 \ .
\end{equation}

If $Z$ is time dependent, then a more general condition is
\begin{equation}
	\frac{\partial Z}{\partial t} + [ Z, \Gx]=0 \ .
\end{equation}

A vector field $Z$ is called a \Def{master symmetry} if
\begin{equation}
	[[Z, \Gx], \Gx]=0 \ ,
\end{equation}
but
\begin{equation}
	[Z, \Gx] \not= 0 \ .
\end{equation}

Master symmetries were first introduced by Fokas and Fuchssteiner in \cite{fokas1} in connection with the Benjamin-Ono Equation.

Suppose that we have a bi-Hamiltonian system defined by
	the Poisson tensors $\Gp_1$, $\Gp_2$ and
	the Hamiltonians $H_1$, $H_2$.
Assume that $\Gp_1$ is symplectic.
We define the \Def{recursion operator}
	${\mathcal R} = \Gp_2 \Gp_1^{-1}$,
the higher flows
\begin{equation}
	\Gx_{i} = {\mathcal R}^{i-1} \Gx_1 \ ,
\end{equation}
and the higher order Poisson tensors
\bd
	\Gp_i = {\mathcal R}^{i-1} \Gp_1 \ .
\ed

For a non-degenerate bi-Hamiltonian system, master symmetries can be generated using a method due to W. Oevel \cite{oevel}.

\begin{theorem}[Oevel]
Suppose that $X_0$ is a conformal symmetry for both $\Gp_1$, $\Gp_2$ and $H_1$,
	i.e., for some scalars $\Gl$, $\Gm$, and $\Gn$ we have
\bd
	{\mathcal L}_{X_0} \Gp_1= \Gl \Gp_1, \quad \quad
	{\mathcal L}_{X_0} \Gp_2 = \Gm \Gp_2, \quad \quad
	{\mathcal L}_{X_0} H_1 = \Gn H_1 \ .
\ed
Then the vector fields
\bd
	X_i = {\mathcal R}^i X_0
\ed
are master symmetries and we have

\renewcommand{\labelenumi}{(\alph{enumi})}
\begin{enumerate}
\item \makebox[\linewidth][c]{${\mathcal L}_{X_i} H_j = (\Gn +(j-1+i) (\Gm -\Gl)) H_{i+j}$\ ,}
\item \makebox[\linewidth][c]{${\mathcal L}_{X_i} \Gp_j = (\Gm +(j-i-2) (\Gm -\Gl)) \Gp_{i+j}$\ ,}
\item \makebox[\linewidth][c]{$[X_i, X_j]= (\Gm - \Gl) (j-i) X_{i+j}$\ .}
\end{enumerate}
\renewcommand{\labelenumi}{\arabic{enumi}.}
\end{theorem}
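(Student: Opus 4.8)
The plan is to reduce the whole statement to the single structural identity governing how $\mc L_{X_0}$ acts on the recursion operator $\mc R=\Gp_2\Gp_1^{-1}$, and then to propagate it through powers of $\mc R$. \emph{Step 1.} From $\mc L_{X_0}\Gp_1=\Gl\Gp_1$ and the relation $\Gp_1^{-1}\Gp_1=\mathrm{id}$ one gets $\mc L_{X_0}\Gp_1^{-1}=-\Gl\Gp_1^{-1}$; feeding this together with $\mc L_{X_0}\Gp_2=\Gm\Gp_2$ into the product $\Gp_2\Gp_1^{-1}$ gives
\be
	\mc L_{X_0}\mc R=(\Gm-\Gl)\,\mc R \ .
\ee
\emph{Step 2.} Since $(\Gp_1,\Gp_2)$ is a compatible non-degenerate Poisson pair, $\mc R$ is a Nijenhuis operator, and vanishing of its torsion is equivalent to $\mc L_{\mc R Y}\mc R=\mc R\,\mc L_{Y}\mc R$ for every vector field $Y$. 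Iterating this with $Y=\mc R^{i-1}X_0$ and using Step 1 yields $\mc L_{X_i}\mc R=(\Gm-\Gl)\mc R^{i+1}$, and then the Leibniz rule over products gives
\be
	\mc L_{X_i}\mc R^{k}=k\,(\Gm-\Gl)\,\mc R^{k+i} \ .
\ee

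I would prove (c) first, as it involves only $\mc R$ and $X_0$. Expanding $[X_i,X_j]=\mc L_{X_i}(\mc R^{j}X_0)$ by the product rule, the term $(\mc L_{X_i}\mc R^{j})X_0$ equals $j(\Gm-\Gl)X_{i+j}$ by Step 2, while the remaining term is $\mc R^{j}[X_i,X_0]$. Because $[X_0,X_0]=0$, the $i=0$ case of Step 2 gives $[X_i,X_0]=-\mc L_{X_0}X_i=-i(\Gm-\Gl)X_i$, and since $\mc R^{j}X_i=X_{i+j}$ the two pieces combine to $[X_i,X_j]=(\Gm-\Gl)(j-i)X_{i+j}$.

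For (a) I would use the Lenard chain in the form $\nabla H_j=(\mc R^{*})^{j-1}\nabla H_1$, where $\mc R^{*}=\Gp_1^{-1}\Gp_2$ is the adjoint of $\mc R$. Transposition and $d$ both commute with $\mc L_{X_i}$, so the transposes of the Step 2 formulas apply verbatim to $\mc R^{*}$. The base case is $\mc L_{X_i}H_1=\big((\mc R^{*})^{i}\nabla H_1\big)(X_0)=\nabla H_{i+1}(X_0)=\mc L_{X_0}H_{i+1}$; evaluating the diagonal action $\mc L_{X_0}H_j=(\Gn+(j-1)(\Gm-\Gl))H_j$ (obtained from the transposed Step 2 computation together with $\mc L_{X_0}H_1=\Gn H_1$, the differential determining the function up to a constant fixed by homogeneity) at $j=i+1$ gives $\mc L_{X_i}H_1=(\Gn+i(\Gm-\Gl))H_{i+1}$. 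Substituting this base case back into the expansion of $\mc L_{X_i}\nabla H_j$ and collecting terms produces $\mc L_{X_i}H_j=(\Gn+(i+j-1)(\Gm-\Gl))H_{i+j}$, which is (a).

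Claim (b) is structurally the same: writing $\Gp_j=\mc R^{j-1}\Gp_1$ and expanding $\mc L_{X_i}\Gp_j$ by Leibniz, the factor $(\mc L_{X_i}\mc R^{j-1})\Gp_1=(j-1)(\Gm-\Gl)\Gp_{i+j}$ is immediate from Step 2, so the statement collapses to the base case $\mc L_{X_i}\Gp_1=(\Gm-(i+1)(\Gm-\Gl))\Gp_{i+1}$. \textbf{This base case is the main obstacle.} Unlike in (a) and (c), it cannot be read off from Step 2 alone, since $X_i=\mc R X_{i-1}$ and one must show that applying $\mc R$ to the symmetry $X_{i-1}$ again rescales $\Gp_1$ into the next tensor of the hierarchy; this requires the \emph{full} Poisson--Nijenhuis compatibility of $(\Gp_1,\Gp_2)$ --- not merely the torsion-freeness used in Step 2 --- through the identity relating $\mc L_{\mc R Y}\Gp_1$, $\mc R\,\mc L_{Y}\Gp_1$ and $\mc L_{Y}\Gp_2$. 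I would establish it by induction on $i$: the case $i=0$ is the hypothesis $\mc L_{X_0}\Gp_1=\Gl\Gp_1$, and the inductive step propagates the conformal factor through one further application of $\mc R$ using that compatibility identity. Once the base case holds, substitution yields $\mc L_{X_i}\Gp_j=(\Gm+(j-i-2)(\Gm-\Gl))\Gp_{i+j}$, proving (b). Finally, combining (a) and (b) for arbitrary $j$ shows that $[X_i,\Gx_j]=\mc L_{X_i}(\Gp_1\nabla H_j)$ is a constant multiple of $\Gx_{i+j}$; in particular $[[X_i,\Gx_1],\Gx_1]=0$ because the hierarchy flows commute, which is precisely the assertion that the $X_i$ are master symmetries.
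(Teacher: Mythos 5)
The paper itself does not prove this theorem: it is quoted as a known result of Oevel, supported only by the citation \cite{oevel}. There is therefore no in-paper argument to compare your proposal against, and I can only assess it on its own terms.

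Your architecture is the standard one for this result and most of it is sound. Step 1, $\mathcal{L}_{X_0}\mathcal{R}=(\mu-\lambda)\mathcal{R}$, is a correct Leibniz computation, and your reformulation of torsion-freeness as $\mathcal{L}_{\mathcal{R}Y}\mathcal{R}=\mathcal{R}\,\mathcal{L}_{Y}\mathcal{R}$ is exactly the vanishing of the Nijenhuis torsion, so Step 2 and part (c) go through. Part (a) is essentially right as well, modulo the additive constant in passing from $d(\mathcal{L}_{X_0}H_j)=(\nu+(j-1)(\mu-\lambda))\,dH_j$ to the corresponding identity of functions; you acknowledge this, and it is standard to dispose of it by homogeneity or by evaluating at a point where the $H_j$ vanish, but it should be made explicit since the theorem asserts an identity of functions, not of one-forms. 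The genuine gap is the one you flag yourself: the base case $\mathcal{L}_{X_i}\pi_1=(\mu-(i+1)(\mu-\lambda))\pi_{i+1}$ of part (b). Everything else in (b) is bookkeeping once this is known, but this identity is where the real content of the theorem sits: it does not follow from torsion-freeness alone, and you neither state nor verify the Poisson--Nijenhuis compatibility identity (the vanishing of the Magri--Morosi concomitant of $\pi_1$ and $\mathcal{R}$) that your induction would require. ``I would establish it by induction using that compatibility identity'' is a plan, not a proof; as written, part (b) --- and with it your closing argument that the $X_i$ are master symmetries, which uses (a) and (b) together --- remains unproved. To close the gap you must write down the identity governing $\mathcal{L}_{\mathcal{R}Y}\pi_1$ explicitly and check that it propagates the conformal factor through each application of $\mathcal{R}$, or else fall back on Oevel's original argument.
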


\section{The real Lie algebra \so{p}{q}}
In this section we construct a Toda type system associated with the real algebra $\so{p}{q}$.  The construction is similar in spirit to the one of Bogoyavlensky for complex simple Lie algebras.

We start with the real Lie algebra \so{p}{q} where $p=2m$ and $q=2n+1$ for some positive integers $m,n$.

To simplify the notation we let $N = m + n$.
Recall that by definition (see \cite{knapp})
	$$\so{p}{q} = \Set{X\in \gln[p+q]{\nR}}{X^{t}I_{p,q}+I_{p,q}X=0}$$
where $X^t$ is the transpose of $X$ and
$$
I_{p,q} =
\begin{pmatrix}
	I_p		& 0	\\
   0		& -I_q
\end{pmatrix}.
$$

We represent the elements of \so{p}{q} by real $(p+q)$-by-$(p+q)$ matrices
	of the form 	$\begin{pmatrix}a & b \cr b^t & d\end{pmatrix}$ 	with $a$ and $d$ skew symmetric, of size $p$-by-$p$ and $q$-by-$q$ respectively.

A basis for \so{p}{q} consists of the matrices $e_{ij}$ for $1 \leqslant i < j \leqslant p+q$ where:
\begin{alist}
\item For $1 \leqslant i < j \leqslant p$ and $p+1 \leqslant i < j \leqslant p+q$
	\be
	\textrm{the $kl$-th entry of } e_{ij} =
	\left\{
	\begin{array}{rll}
		1& \textrm{if} & (k,l) = (i,j)\\
		-1& \textrm{if} & (k,l) = (j,i)\\
		0& \textrm{if} & (k,l) \neq (i,j)
	\end{array}
	\right.
	\ee
\item For $1 \leqslant i \leqslant p$ and $p+1 \leqslant j \leqslant p+q$
	\be
	\textrm{the $kl$-th entry of } e_{ij} =
	\left\{
	\begin{array}{rll}
		1& \textrm{if} & (k,l) = (i,j)\\
		1& \textrm{if} & (k,l) = (j,i)\\
		0& \textrm{if} & (k,l) \neq (i,j)
	\end{array}
	\right.
	\ee
\end{alist}

A basis of the Cartan subalgebra \csH\ consists of $N$ block-diagonal matrices of size $2N+1$, 	whose first $N$ blocks are of size 2-by-2,	with \mbox{$i$-th} block $J = \begin{pmatrix}0 & 1 \cr -1 & 0\end{pmatrix}$, and
	whose last block is of size 1-by-1.
Note that we can express the elements of \csH\ in Kronecker product form if we drop their last row and column which consist entirely of 0's.
Hence, let
	$d_{i,i} = diag(0,\dots,1,\dots,0)$	be of size $N$-by-$N$ with 1 in the $i$-th position
	and let $\otimes$ denote the Kronecker product of matrices.
Then the $i$-th basis element of \csH\
	is given by	$d_{i,i} \otimes J$ and
	is equal to $e_{2i-1,2i}$ with the last row and column removed.
	
We denote the basis elements of \csH\ by $h_{\Ga_k}$ where the $\Ga_k$, are the simple roots defined as follows. For $1 \leqslant k \leqslant N-1$:
	\be
	\Ga_k(h_{\Ga_j}) =
	\left\{
	\begin{array}{rll}
		(-1)^k\; \ii& & j = k,\\
		(-1)^k\; \ii& & j = k+1,\\
		    0& & \textrm{otherwise}.
	\end{array}
	\right.
	\ee
Alternatively the simple roots correspond to vectors of size $N$ as follows:
\be
\begin{array}{rcl}
	\Ga_{1} &=& (-\ii,-\ii,0, \dots, 0),\\
	\Ga_{2} &=& (0,\ii,\ii,0, \dots, 0),\\
	\Ga_{3} &=& (0,0,-\ii,-\ii,0, \dots, 0),\\
	    &\vdots&\\
	\Ga_{N-1} &=& (0,0, \dots, (-1)^{N-1}\ii,(-1)^{N-1}\ii).
\end{array}
\ee
The root vectors corresponding to the roots $\Ga_k, -\Ga_{k}$, for $1 \leqslant k \leqslant N-1$, are given by:
\be
\begin{array}{rcl}
	x_{\Ga_k} &=& 1/2\left(
	e_{{2k-1,2k+1}}-e_{{2k,2k+2}} +
	(-1)^{k}(e_{{2k-1,2k+2}}+e_{{2k,2k+1}}) \ii
	\right)\\
	x_{-\Ga_k} &=& 1/2\left(
	e_{{2k-1,2k+1}}-e_{{2k,2k+2}} +
	(-1)^{k+1}(e_{{2k-1,2k+2}}+e_{{2k,2k+1}}) \ii
	\right).
\end{array}
\ee
We can express the root space in Kronecker product form as follows.
Firstly, let
\settowidth{\lengthX}{$\begin{pmatrix}1 & (-1)^k \ii \cr (-1)^k \ii & -1\end{pmatrix}$}
\be
	J_k = \begin{pmatrix}1 & (-1)^k\ \ii \cr (-1)^k\ \ii & -1\end{pmatrix}
	\quad \textrm{for $1 \leqslant k \leqslant N-1$.}
\ee
Secondly, for $1 \leqslant k \leqslant N-1$, let $X_k$ be an $N$-by-$N$ matrix defined as follows:\\
For $k \neq m$:
\be
	\textrm{the $ij$-th entry of } X_{k} =
	\left\{
	\begin{array}{rl}
		1/2& \textrm{if } (i,j) = (k,k+1)\\
		-1/2& \textrm{if } (i,j) = (k+1,k)\\
		0&  \textrm{otherwise.}
	\end{array}
	\right.
\ee
For $k = m$:
	\be
	\textrm{the $ij$-th entry of } X_{m} =
	\left\{
	\begin{array}{rl}
		1/2& \textrm{if } (i,j) = (m,m+1)\\
		1/2& \textrm{if } (i,j) = (m+1,m)\\
		0&  \textrm{otherwise.}
	\end{array}
	\right.
	\ee
Finally, for $1 \leqslant k \leqslant N-1$, we have:
	$$x_{\Ga_k} = X_k \otimes J_k.$$

One can define the system in Lax pair form ($L(t), B(t)$) in ${\mathcal G}$,
using the root system as follows:
\be
\begin{array}{lcl}\label{laxpair}
  L(t)&=&\text{$\displaystyle\sum_{j=1}^{N} (-1)^j b_j(t) h_{\Ga_j} +
  			\sum_{j=1}^{N-1} (-1)^j a_j(t) (x_{\Ga_j}+x_{-\Ga_j})$} \ ,\\
  B(t)&=&\text{$\displaystyle-\ii \sum_{j=1}^{N-1} (-1)^j a_j(t) (x_{\Ga_j}-x_{-\Ga_j})$} \ .
\end{array}
\ee
As usual $h_{\alpha_j}$ is an element of a fixed Cartan subalgebra and $x_{\alpha_j}$ is a root vector corresponding to the simple root $\alpha_j$.
It is straightforward to verify that the Lax equation $\dot L(t) = [B(t), L(t)]$ is consistent and gives the following equations:
\be \label{dot}
\begin{array}{lcl}
 \dot a_i	&=& a_i \, (b_{i+1} -b_i ), \quad i = 1, 2, \dots , N-1\\
 \dot b_i	&=& 2 \, ( a_i^2 - a_{i-1}^2 ), \quad 1 \leqslant i \leqslant N \textrm{ and } i \nin \{1,m,m+1,N\},\\[2ex]
 \dot b_1	&=& 2 \, a_1^2\\
 \dot b_m	&=& -2 \, ( a_m^2 + a_{m-1}^2 )\\
 \dot b_{m+1} &=& 2 \, ( a_{m+1}^2 + a_{m}^2 )\\
 \dot b_N	&=& -2 \, a_{N-1}^2 \ .
\end{array}
\ee

Before giving an explicit matrix form of the Lax pair we remark that the last row and column of the matrices $L$ and $B$  consist entirely of zeros. For the sake of simplicity we will omit the last row and column of $L$ and $B$ since we still have a consistent Lax pair.

The Lax pair  (\ref{laxpair}),  can be written explicitly (after omitting the last row and column)  using the Kronecker product as follows:
\be \label{new L}
 L= L_1 \otimes
  \begin{pmatrix}
  0 & -1 \cr
  1 & 0
  \end{pmatrix}
  +
  L_2 \otimes
  \begin{pmatrix}
  1 & 0 \cr
  0 & -1
  \end{pmatrix}
\ee
where
\be \label{new L1}
L_1=
\begin{pmatrix}
	b_1  & 0   & \cdots & \cdots & 0   \cr
	0   & -b_2  & \ddots &    & \vdots \cr
	\vdots & \ddots & \ddots & \ddots & \vdots \cr
	\vdots &    & \ddots & \ddots & 0   \cr
   0	  & \cdots & \cdots & 0   & (-1)^{N+1}b_N
\end{pmatrix} \ ,
\ee

\be \label{new L2}
	L_2 = \left(
\begin{array}{ccccccclc}
	0		& -a_1	& 0		& \cdots 		&&		& \cdots 		&\:\:\:0\\
	a_1		& 0		& a_2	& \ddots 		&&		& 				&\:\:\:\;\! \vdots	\\
	0		& -a_2	& \ddots		& \ddots		&&		& 				&			\\	
	\vdots	& \ddots&\ddots & 		&&		&				& 			\\
      &    &    & (-1)^{m}a_m	&&  	&   			&			\\
			&		& 		& \phantom{xxx}\ddots	&	&			&\ddots	&\:\:\:\;\!\vdots	\\
			&		& 		& 		&&\ddots&\ddots	&\:\:\: 0	\\
	\vdots	&		& 		& 		&\ddots\phantom{x}&\ddots	& 0	&(-1)^{N-1}a_{N-1}		\\[2ex]
   0		&\cdots &		& 		&\cdots& 0	&(-1)^{N}a_{N-1}&\:\:\: 0		
\end{array} \right)
\ .
\ee
Note that the upper diagonal of $L_2$ is
	$$(-a_1, a_2, \dots, (-1)^{i}a_{i}, \dots, (-1)^{N-1}a_{N-1}),$$
whereas the lower diagonal is
	$$(a_1, -a_2, \dots, (-1)^{m}a_{m-1}, (-1)^{m}a_{m}, (-1)^{m+2}a_{m+1}, \dots , (-1)^{N}a_{N-1}).$$
That is
\be
	\textrm{the $ij$-th entry of } L_2 =
	\left\{
	\begin{array}{rl}
		(-1)^{i}a_{i}	& \textrm{if } j = i+1,\\
		(-1)^{j+1}a_{j}	& \textrm{if } j = i-1, j \neq m,\\
		(-1)^{j}a_{j}	& \textrm{if } j = i-1, j = m,\\
		0&  \textrm{otherwise.}
	\end{array}
	\right.
\ee
The matrix $B$ is defined as follows:
\be \label{new B}
 B= B_1 \otimes
  \begin{pmatrix}
  0 & 1 \cr
  1 & 0
  \end{pmatrix}
\ee
where
\be \label{new B1}
B_1 =
\begin{pmatrix}
  0     & a_1  & 0   & \cdots &    & \cdots  & 0    \cr
  -a_1   & 0   & a_2  & \ddots &    &     & \vdots \cr
  0     & -a_2  & 0   & \ddots &    &     &     \cr
  \vdots  & \ddots & \ddots & \ddots &    & \ddots  & \vdots \cr
       &    &    & a_m  &    & \ddots  & 0    \cr
  \vdots  &    &    & \ddots & \ddots & \ddots  & a_{N-1} \cr
  0     & \cdots &    & \cdots & 0   & -a_{N-1} & 0
\end{pmatrix}
\ .
\ee
Note that the upper diagonal of $B_1$ is $(a_1, a_2, \dots , a_{N-1})$,
whereas the lower diagonal is
$$(-a_1, \dots , -a_{m-1}, a_{m}, -a_{m+1}, \dots , -a_{N-1}).$$

It turns out that the trace of odd powers of $L$ are all zero.
\begin{lemma}
\bd
H_{2i+1}=\text{$\displaystyle  \tr L^{2i+1}$} = 0 \ \ \   \  \forall i  \ . \ed
\end{lemma}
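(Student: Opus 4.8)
The plan is to exploit the Kronecker-product structure of $L$ exhibited in (\ref{new L}). Write
\[
L = L_1 \otimes A + L_2 \otimes B, \qquad A = \begin{pmatrix} 0 & -1 \cr 1 & 0 \end{pmatrix}, \quad B = \begin{pmatrix} 1 & 0 \cr 0 & -1 \end{pmatrix},
\]
where $L_1$ is the diagonal matrix (\ref{new L1}) and $L_2$ is the tridiagonal matrix (\ref{new L2}). The two constant $2$-by-$2$ factors satisfy $A^2 = -I$, $B^2 = I$ and $AB = -BA$, and both are traceless. Rather than computing powers of $L$ directly, I would look for a similarity transformation sending $L$ to $-L$; once this is available, the vanishing of all odd traces is immediate from the conjugation invariance of the trace.

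The candidate conjugator is $S = I_N \otimes C$ with $C = \begin{pmatrix} 0 & 1 \cr 1 & 0 \end{pmatrix}$. First I would record that $C^2 = I$, so $S^{-1} = S$, and that $C$ anticommutes with both $A$ and $B$; a direct $2$-by-$2$ computation gives $C A C = -A$ and $C B C = -B$. Then, using the mixed-product rule $(M \otimes P)(M' \otimes P') = (MM') \otimes (PP')$,
\[
S L S^{-1} = L_1 \otimes (C A C) + L_2 \otimes (C B C) = -L_1 \otimes A - L_2 \otimes B = -L .
\]
Thus $L$ and $-L$ are similar.

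With this in hand the conclusion follows in one line: since the trace is invariant under conjugation,
\[
\tr L^{2i+1} = \tr\big(S L^{2i+1} S^{-1}\big) = \tr\big((SLS^{-1})^{2i+1}\big) = \tr\big((-L)^{2i+1}\big) = -\,\tr L^{2i+1},
\]
so $\tr L^{2i+1} = 0$ for every $i$.

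I expect no serious obstacle; the only thing to get right is the sign bookkeeping in the $2$-by-$2$ relations $CAC = -A$ and $CBC = -B$, which is the crux of the argument. As an alternative that avoids guessing the conjugator, one can expand $L^{2i+1}$ into a sum of Kronecker products $(\text{word in } L_1, L_2) \otimes (\text{word in } A, B)$; each summand has trace equal to the product of its two factor traces, and since a word of odd length $2i+1$ in the letters $A, B$ necessarily contains one of the two letters an odd number of times, it reduces (using $A^2=-I$, $B^2=I$, $AB=-BA$) to $\pm A$ or $\pm B$, both of which are traceless. Either route yields the claim.
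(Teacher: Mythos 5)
Your proof is correct, but it takes a genuinely different route from the paper's. The paper argues abstractly from the defining relation of the Lie algebra: for any $X\in\so{p}{q}$ one has $X=-I_{p,q}^{-1}X^{t}I_{p,q}$, hence the characteristic polynomial satisfies $f(-\lambda)=(-1)^{n}f(\lambda)$; since the matrix size $n=p+q$ is odd, dividing out the zero eigenvalue leaves an even polynomial, so the spectrum is symmetric about $0$ and all odd traces vanish. You instead exploit the concrete Kronecker form $L=L_1\otimes A+L_2\otimes B$ and exhibit an explicit involution $S=I_N\otimes C$ with $SLS^{-1}=-L$; your $2\times 2$ computations $CAC=-A$, $CBC=-B$ check out, and the one-line trace argument is then airtight (your alternative word-counting argument via $\tr(M\otimes P)=\tr M\cdot\tr P$ is also valid). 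The paper's argument is more general --- it applies to every element of $\so{p}{q}$ and yields the symmetry of the full spectrum as a byproduct --- whereas yours is more elementary and self-contained, requiring no appeal to the characteristic polynomial. One small point you should make explicit: the Kronecker form \eqref{new L} is the $2N\times 2N$ matrix obtained after deleting the last (identically zero) row and column of the true $(2N+1)\times(2N+1)$ Lax matrix, so you need the trivial observation that this deletion does not change $\tr L^{k}$. With that remark added, your proof is complete.
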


\begin{proof}
Recall that by definition
	$$\so{p}{q} = \Set{X\in \gln[p+q]{\nR}}{X^{t}I_{p,q}+I_{p,q}X=0}$$
where $X^t$ is the transpose of $X$ and
$$
I_{p,q} =
\begin{pmatrix}
	I_p		& 0	\\
   0		& -I_q
\end{pmatrix}.
$$
It follows that $X=-I_{p,q}^{-1}X^t I_{p,q}$.
Let $f_n(x)$ be the characteristic polynomial of $X$, i.e.,
\bd f_n(\lambda)={\rm det}\ (\lambda I -X )  \ . \ed
Then
\begin{eqnarray*}
f_n( -\lambda)&=& {\rm det} (-\lambda I -X) = (-1)^n {\rm det} (\lambda I +X)\\
	&=&(-1)^n {\rm det} (\lambda I - I_{p,q}^{-1} X^t I_{pq} ) =(-1)^n  {\rm det}( I_{p,q}^{-1} I_{p,q} - I_{p,q}^{-1} X^t I_{pq})\\
	&=&(-1)^n {\rm det}( I_{p,q}^{-1} (\lambda I- X^t) I_{p,q} ) =(-1)^n {\rm det} (\lambda I- X^t)=(-1)^n \end{eqnarray*}
%
%
Since the Lax matrix is of odd dimension, $n$ is odd. Dividing by $\lambda$ (which corresponds to eigenvalue zero) we obtain an even polynomial.
Therefore, for the Lax matrix $L$  if $\lambda$ is an eigenvalue of $L$ so is $-\lambda$. This implies that characteristic polynomial of $L$  is an even polynomial. It also shows that ${\rm tr} \ {L}^k=0$ if $k$ is odd.
\end{proof}

As we will see shortly the function $H_1=b_1+b_2+ \dots + b_{N-1}$  (which is not the trace of $L$)  will turn out to be a Casimir.  Therefore we need $N-1$ independent functions in involution is order to establish integrability.

The following functions are constants of motion:
\bd
H_{2i}  = \text{$\displaystyle \frac{(-1)^i}{4i}\, \tr  L^{2i}$},	\textrm{ for } i = 1, \dots N-1.
\ed

\section{Symplectic realization}
In ${\nR}^{2N}$ with coordinates $({\bf q}, {\bf p})$, 	${\bf q}=(q_1, \dots, q_N)$, and
	${\bf p}=(p_1, \dots, p_N)$ we choose the following Hamiltonian function:
\be \label{a2np}
	H(\mathbf{q},\mathbf{p}) =
		\sum_{i=1}^N \, \frac{1}{2} \, p_i^2 +
		\left(\sum _{i=1, i \neq m}^{N-1} \, e^{ q_i-q_{i+1}}\right) - e^{ q_m-q_{m+1}} \ ,
\ee
which differs from the classical, non-periodic Toda lattice by a minus sign at position $m$.

Using Flaschka's  transformation $F: ({\bf q}, {\bf p}) \to ({\bf a,b})$  we  obtain equations (\ref{dot}).  Recall that the  $(a,b)$ variables are defined by:
\be
\begin{array}{lcll}
	a_i & = &\frac{1}{2}e^{ \frac{1}{2} (q_i-q_{i+1})},	\quad &i=1,2, \dots, N-1, \\
	b_i & = &-\frac{1}{2} p_i,							\quad &i=1,2, \dots, N.
	\label{FlaschkaTransformation}
\end{array}
\ee

Let the mapping $F: {\bf R}^{2N} \to {\bf R}^{2N-1}$, where
\bd
 F: (q_1, \dots, q_N, p_1, \dots, p_N) \to (a_1, \dots, a_{N-1}, b_1, \dots, b_N) \ ,
\ed
denote the Flaschka transformation given by equation (\ref{f1}).
Recall that the standard symplectic bracket on ${\bf R}^{2N}$ is mapped onto the linear bracket $\pi_1$ in $({\bf a,b})$ coordinates. The linear bracket has only one Casimir 	$H_1=b_1+b_2 + \dots +b_N$.

We can take as Hamiltonian the function $H_2 = -\frac{1}{4}\, \tr\, L^2$,
	where $L$ is given by (\ref{new L}).

We define a quadratic bracket which we call  $\Gp_2$ with defining relations:
\be \label{poisson p2}
\begin{array}{rcll}
	\{a_i, a_{i+1} \}	&=& \frac{1}{2} a_i a_{i+1} \ , \quad & i= 1, \dots , N-1,\\
	\{a_i, b_i \}		&=& -a_i b_i \ , \quad & i= 1, \dots , N-1,\\
	\{a_i, b_{i+1} \}	&=& a_i b_{i+1} \ , \quad & i= 1, \dots , N-1, \\
	\{b_i, b_{i+1} \}	&=& 2\, a_i^2 \ , \quad & i= 1, \dots , m-1, m+1 , \dots, N-1, \\
	\{b_m, b_{m+1} \}	&=&  -2\, a_m^2 \ ;
\end{array}
\ee
all other brackets are zero. Note that this bracket is different from the Adler bracket defined in (\ref{Adler_bracket}).
We remark that the mapping of ${\bf R}^{2N-1} \to {\bf R}^{2N-1}$ given by
\bd (a_1, a_2, \dots , a_{N-1}, b_1, b_2, \dots, b_n) \to (a_1, \dots, a_{m-1}, i a_m, a_{m+1}, \dots,  a_{N-1}, b_1, b_2, \dots, b_n) \ed
sends the Adler bracket $\Pi_2$ onto the bracket $\pi_2$. Since $\Pi_2$ is known to be Poisson this implies that $\pi_2$ is also Poisson. We simply think of the two brackets as complex brackets connected by a complex Poisson map and therefore the Jacobi identity holds.  Since the coefficients of the defining functions of $\pi_2$ are real it is also Poisson as a real Poisson manifold.

This bracket also has
	one Casimir ${\rm det} \, L$ and
	$H_1 = b_1 + \dots b_N$ is the Hamiltonian.

Furthermore, $\pi_2$ is compatible with $\pi_1$.
We also have
\be
\pi_1 \nabla H_2 = \pi_2 \nabla H_1 \ . \label{a98}
\ee

Similarly we define another Poisson bracket, $\pi_3$, which is a modification of the cubic Toda Poisson bracket. The defining relations for $\pi_3$ are
\be \label{poisson p3}
\begin{array}{rcll}
	\{a_i, a_{i+1} \}	&=& a_i a_{i+1} b_{i+1} \ , \quad & i= 1, \dots , N-2,\\
	\{a_i, b_i \}		&=& -a_i b_i^2-a_i^3 \ , \quad & i= 1, \dots , m-1, m+1 , \dots, N-1,\\
	\{a_i, b_i \}		&=& -a_i b_i^2+a_i^3 \ , \quad & i=m,\\
	\{a_i, b_{i+1}\}	&=& a_i b_{i+1}^2 +a_i^3 \ , \quad & i= 1, \dots , m-1, m+1 , \dots, N-1,\\
	\{a_i, b_{i+1}\}	&=& a_i b_{i+1}^2 -a_i^3 \ , \quad & i=m,\\
	\{a_i, b_{i+2} \}	&=& a_i a_{i+1}^2 \ , \quad & i= 1, \dots , m-2, m , \dots, N-2,\\
	\{a_i, b_{i+2} \}	&=& -a_i a_{i+1}^2 \ , \quad & i=m-1,\\
	\{a_{i+1}, b_i \}	&=& -a_i^2 a_{i+1} \ , \quad & i= 1, \dots , m-1, m+1 , \dots, N-2,\\
	\{a_{i+1}, b_i \}	&=& a_i^2 a_{i+1} \ , \quad & i=m,\\
	\{b_i, b_{i+1} \}	&=& 2\, a_i^2 \, (b_i+b_{i+1}) \ , \quad & i= 1, \dots , m-1, m+1 , \dots, N-1,\\
	\{b_i, b_{i+1} \}	&=& -2\, a_i^2 \, (b_i+b_{i+1}) \ , \quad & i=m;
\end{array}
\ee
all other brackets are zero.

The bracket $\pi_3$ is compatible with both $\pi_1$ and $\pi_2$  and satisfy Lenard type relations like
\bd \pi_1 dH_4= \pi_3 dH_2  \ . \ed

We note that
\be \label{H1}
H_1=\sum_{i=1}^N b_i=-\frac{1}{2} (p_1+p_2+ \dots + p_N) \ ,
\ee
corresponds to the total momentum ($H_1 \neq \tr L$) and
\be \label{H2}
H_2	= H(q_1, \dots, q_N, \, p_1, \dots, p_N)
	= \frac{1}{2}\sum_{i=1}^N b_i^2+ \left(\sum_{i=1, i \neq m}^{N-1} a_i^2 \right)-a_m^2
\ee
is the Hamiltonian.

We  now  define Toda hierarchies for the Toda lattice in $(q,p)$ variables.
We follow reference \cite{damianou2}.

Let $\hat{J}_1$ be the standard  symplectic bracket  with Poisson matrix
\bd
\hat{J}_1 =
\begin{pmatrix}
	0 & I \cr
	-I & 0
\end{pmatrix} \ ,
\ed
where $I$ is the $N \times N$ identity matrix. We use $J_1=4 \hat{J}_1$.
With this convention the bracket $J_1$ is mapped precisely onto the bracket $\pi_1$ under the Flaschka transformation (\ref{f1}).
We define $\hat{J}_2$ to be the tensor
\bd
\hat{J}_2 =
\begin{pmatrix}
	A & B \cr
	-B & C
\end{pmatrix} \ ,
\ed
where
$A$ is the skew-symmetric matrix defined by $a_{ij}=1=-a_{ji}$ for $i<j$,
$B$ is the diagonal matrix $(-p_1, -p_2, \dots, -p_N)$ and
$C$ is the skew-symmetric matrix whose non-zero terms are
	$c_{i,i+1}=-c_{i+1,i}=e^{q_i-q_{i+1}}$ for $i=1,2, \dots,m-1,m+1,\dots, N-1$ and
	$c_{m,m+1}=-c_{m+1,m}=-e^{q_m-q_{m+1}}$.
We define $J_2=2 \hat{J}_2$.
 With this convention the
bracket $J_2$ is mapped precisely onto the bracket $\pi_2$ under the Flaschka transformation.

It is easy to see that we have a bi-Hamiltonian pair. We define
\bd
h_1=-2(p_1+p_2+\dots +p_N) \ ,
\ed
and $h_2$ to be the Hamiltonian:
\bd
h_2=
	\sum_{i=1}^N \, \frac{1}{2} \, p_i^2 +
	\left(\sum _{i=1, i \neq m}^{N-1} \, e^{ q_i-q_{i+1}}\right) - e^{ q_m-q_{m+1}} \ .
\ed

Under Flaschka's transformation (\ref{f1}),
	$h_1$ is mapped onto $4(b_1+b_2+ \dots+b_N) = 4 H_1$ and
	$h_2$ is mapped onto $-\tr  L^2 = 4 H_2$.
Using the relationship (\ref{a98}), we obtain, after multiplication by 4, the
following pair:
\bd
J_1 \nabla h_2= J_2 \nabla h_1 \ .
\ed
We define the recursion operator as follows:
\bd
{\mathcal R}=J_2 J_1^{-1} \ .
\ed
The matrix form of ${\mathcal R}$ is quite simple:
\be
{\mathcal R} =\frac{1}{2}
\begin{pmatrix}
	B &-A \cr
	C& B
\end{pmatrix}
\ . \label{c1}
\ee

In $(q,p)$ coordinates, the symbol $\chi_i$ is a shorthand for $\chi_{h_i}$. It is generated as
usual by
\bd
\chi_i = {\mathcal R}^{i-1} \chi_1 \ .
\ed
In a similar fashion we obtain the higher order Poisson tensors
\bd
J_i = {\mathcal R}^{i-1} J_1 \ .
\ed
We finally define the conformal symmetry
\bd
Z_0=\sum_{i=1}^N (N-2i+1) \frac{\partial}{\partial q_i} +\sum_{i=1}^N p_i \frac{\partial}{\partial p_i} \ .
\ed
It is straightforward to verify that
\begin{eqnarray*}
{\mathcal L}_{Z_0} J_1&=&- J_1 \ ,\\
{\mathcal L}_{Z_0} J_2&=&0 \ .
\end{eqnarray*}
In addition,
\begin{eqnarray*}
Z_0(h_1)&=&h_1\\
Z_0(h_2)&=&2h_2 \ .
\end{eqnarray*}

Consequently, $Z_0$ is a conformal symmetry for $J_1$, $J_2$ and $h_1$.
The constants appearing in Oevel's Theorem are
	$\lambda=-1$, $\mu=0$ and $\nu=1$.
Therefore, we end--up with the following deformation relations:
\begin{eqnarray*}
[Z_i, h_j]&=& (i+j)h_{i+j}\\
{\mathcal L}_{Z_i} J_j &=& (j-i-2) J_{i+j} \\ 
\text{$[Z_i, Z_j]$} &=& (j-i) Z_{i+j} \ .
\end{eqnarray*}


We note that Magri's theorem implies the Lenard type relations
 \be \label{lenardpq}
 J_i dh_j=J_j dh_i   \ .
\ee

It is straightforward to verify that the symplectic bracket $J_1$ is mapped via Flaschka's transformation $F$ onto the linear Poisson bracket $\pi_1$ and  $J_2$ is mapped onto the quadratic bracket $\pi_2$.  Similarly $h_1$, $h_2$ correspond via $F$ to $H_1$ and $H_2$ respectively. The master symmetries $Z_i$ are mapped via $F$ to produce a sequence of master symmetries $X_i$ in the $(a,b)$ phase space.

\section{Integrability}
In this section we prove the integrability of  the $so(p,q)$ Toda system.
We note that the tensors $J_1$ and $J_2$  project under Flaschka's transformation to the linear and quadratic brackets  $\pi_1$ and $\pi_2$ respectively. As in the case of the classical Toda we have the bi-hamiltonian formulation.
\begin{equation}
\pi_1 \nabla H_2 = \pi_{2} \nabla H_1   \ .
\end{equation}
Of course this bi-hamiltonian formulation is quite different than the Lenard relations (\ref{lenardpq}).  The Poisson tensors $\pi_1$ and $\pi_2$ are no longer symplectic and as we know there exists no recursion operator in the $(a,b)$ space. It is well-known (see  \cite{fernandes}, \cite{damianou2}, \cite{morosi}) that  the recursion operator ${\mathcal R}$ cannot be  reduced.  It is also well-known that the vector fields $Z_i$ are projectable and give rise to a sequence of vector fields $X_1, X_2, \dots $ in the $(a,b)$ space \cite{fernandes}. To prove the involution of the invariants we will only use the master symmetry $X_2$ and the reduced Hamiltonians $H_{2i}$ which are the constants of motion for our system.
We will make use of the Lenard relations
\be \label{lenardab}
\pi_1 dH_{2i}=\pi_3 dH_{2i-2}   \ \  \forall i  \ .
\ee

\begin{theorem}
The functions $H_{2i}$ are in involution with respect to the $\pi_1$ Poisson bracket.

\end{theorem}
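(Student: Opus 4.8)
The plan is to prove involution by induction on the total index $n=i+j$, using the master symmetry $X_2$ to raise indices and the Lenard relations (\ref{lenardab}) to trade $\pi_3$-brackets for $\pi_1$-brackets. The engine of the argument is the elementary identity, valid for any vector field $X$ and any bivector $\pi$,
\bd
	X\bigl(\pi(df,dg)\bigr)=\pi\bigl(d(Xf),dg\bigr)+\pi\bigl(df,d(Xg)\bigr)+(\mathcal{L}_X\pi)(df,dg) \ ,
\ed
which I would apply with $X=X_2$, $\pi=\pi_1$, $f=H_{2i}$ and $g=H_{2j}$.

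First I would collect the three structural ingredients. (i) \emph{Base case:} since each $H_{2j}$ is a constant of motion for the flow generated by $H_2$ with respect to $\pi_1$, we have $\{H_2,H_{2j}\}_{\pi_1}=0$ for every $j$. (ii) \emph{Deformation relations:} the vector field $X_2$ is the projection of $Z_2$ under Flaschka's transformation, and it therefore inherits from Oevel's Theorem the relations $\mathcal{L}_{X_2}H_{2i}=c_i\,H_{2i+2}$ and $\mathcal{L}_{X_2}\pi_1=\beta\,\pi_3$, coming from the $(q,p)$ identities $Z_2(h_j)=(j+2)\,h_{j+2}$ and $\mathcal{L}_{Z_2}J_1=-3\,J_3$; here $\beta$ is a constant and, crucially, the scalars $c_i$ are all nonzero. (iii) \emph{Lenard:} rewriting (\ref{lenardab}) as $\pi_3\,dH_{2j}=\pi_1\,dH_{2j+2}$ gives $\{H_{2i},H_{2j}\}_{\pi_3}=\{H_{2i},H_{2j+2}\}_{\pi_1}$.

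Writing $S_{i,j}=\{H_{2i},H_{2j}\}_{\pi_1}$ and substituting (ii) and (iii) into the identity above produces the recursion
\bd
	X_2(S_{i,j})=c_i\,S_{i+1,j}+(c_j+\beta)\,S_{i,j+1} \ .
\ed
I would then induct on $n=i+j$, with antisymmetry giving $S_{i,i}=0$ and (i) giving $S_{1,j}=0$. Assuming $S_{i,j}=0$ whenever $i+j\leqslant n$, the left-hand side vanishes for each pair with $i+j=n$, so that $0=c_i\,S_{i+1,j}+(c_j+\beta)\,S_{i,j+1}$ for all such pairs. Seeding with $S_{1,n}=0$ and taking $(i,j)=(1,n-1),(2,n-2),\dots$ in succession, the nonvanishing of the $c_i$ forces $S_{2,n-1}=0$, then $S_{3,n-2}=0$, and so on; together with antisymmetry this makes every $S_{i,j}$ with $i+j=n+1$ vanish, closing the induction and establishing $\{H_{2i},H_{2j}\}_{\pi_1}=0$ for all $i,j$.

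The combinatorial induction is routine once the recursion is in hand; the real work lies in justifying ingredient (ii). One must verify that $Z_2$ and $J_3$ are genuinely projectable under the non-injective Flaschka map and that their projections coincide with $X_2$ and a multiple of $\pi_3$, and then, most delicately, that tracking the proportionality constants between the $(q,p)$ hierarchy $(h_j,J_j)$ and the $(a,b)$ hierarchy $(H_{2i},\pi_k)$ leaves the resulting $c_i$ nonzero. A single vanishing $c_i$ would break the chain, so this bookkeeping is the crux of the proof.
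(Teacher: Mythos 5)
Your proposal is correct in strategy, but it takes a genuinely different and considerably heavier route than the paper. The paper's proof uses no master symmetry at all: it combines the Lenard relations $\pi_1\,dH_{2i}=\pi_3\,dH_{2i-2}$ with the antisymmetry of $\pi_1$ and $\pi_3$ to obtain the telescoping identity $\{H_{2i},H_{2j}\}_{\pi_1}=\{H_{2i-2},H_{2j+2}\}_{\pi_1}$, and iterates down to $\{H_2,H_{2i+2j-2}\}_{\pi_1}=0$, which holds because every $H_{2k}$ is a constant of motion of the Hamiltonian $H_2$. Note that this telescoping identity is already implicit in your own ingredient (iii): writing $\{H_{2i},H_{2j}\}_{\pi_3}=\langle dH_{2i},\pi_3 dH_{2j}\rangle$ and applying the Lenard relation once in each slot (using antisymmetry of $\pi_3$ in between) gives $S_{i,j+1}=S_{i+1,j}$ directly, so the entire apparatus of ingredient (ii) --- the projectability of $Z_2$, the identification of $\mathcal{L}_{X_2}\pi_1$ with a multiple of $\pi_3$, and the bookkeeping of the constants $c_i$ that you rightly identify as the crux of your argument --- is unnecessary. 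Your induction via the recursion $X_2(S_{i,j})=c_i S_{i+1,j}+(c_j+\beta)S_{i,j+1}$ is the classical master-symmetry argument of Damianou and Fernandes and does work (the constants coming from Oevel's theorem, $X_2(H_j)=(j+2)H_{j+2}$ with $\lambda=-1$, $\mu=0$, $\nu=1$, are indeed nonzero, and the paper asserts the projectability you need), and it has the virtue of generalizing to situations where only one Poisson pair is known; but for this theorem the paper's direct computation is shorter, more elementary, and rests on strictly fewer unverified inputs --- both proofs assume the Lenard relations, while yours additionally assumes the deformation relations for $X_2$.
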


\begin{proof}
Using the Lenard relations (\ref{lenardab})  it is straightforward to prove the involution of the functions $H_{2i}$.  The Poisson bracket we use $\{ , \}$ is the linear Poisson bracket $\pi_1$.

Since $H_2$ is the Hamiltonian  we have  $\{ H_2, H_{2j} \}=0$  $\forall j$   since $H_{2j}$ is clearly a constant of motion.
Now we  calculate
\bd
\begin{array} {lcl}
\{H_{2i}, H_{2j} \} &=& <dH_{2i},\  \pi_1 dH_{2j}>  \\
                &=&-<dH_{2j},\  \pi_1 dH_{2i}> \\
                &=&-<dH_{2j}, \  \pi_3 dH_{2i-2}> \\
                &=&<dH_{2i-2}, \  \pi_3 dH_{2j} > \\
                &=&<dH_{2i-2}, \  \pi_1 dH_{2j+2}>  \\
                &=&\{H_{2i-2}, H_{2j+2}  \} \\
                &&   \ \ \ \ \ \  \vdots \\
                &&    \ \ \ \ \ \ \vdots \\
                &=&   \{H_2, H_{2i+2j-2}  \}= 0 \ .
\end{array}
\ed

\end{proof}

It is not difficult to see that the functions $H_{2i}$, $i=1,2, \dots, N-1$ are functionally independent on an open dense set. Using the original approach of Henon \cite{henon} it is enough to assume $a_1=a_2=\dots =a_{N-1}=0$. Then
the traces of $L^{2i}$ are symmetric functions of the $b_i$.  In fact taking into account the form of the Lax matrix (\ref{new L} ) it is easy to see that

\bd H_{2i} =\frac{1}{2i}(b_1^{2i} + ... + b_N^{2i}) \ed
at ${\bf a}=0$.   Therefore the Jacobian matrix has the form

\bd
\begin{pmatrix}
	b_1  & b_2  & \cdots & \cdots & b_N   \cr
	b_1^3   & b_2^3  & \cdots &    & b_N^3 \cr
	b_1^5 & b_2^5 & \cdots & \cdots & b_N^5 \cr
	\vdots &    & \vdots & \vdots & \vdots   \cr
    b_1^{2N-3}  & \cdots & \cdots &   & b_N^{2N-3}
\end{pmatrix} \ .
\ed

Selecting  the top left corner $N-1 \times N-1$ minor we obtain a  Vandermonde matrix whose determinant is different than zero on the set $b_i \not=b_j$,  for  $ i\not=j$.

We can prove integrability in a different way, by finding a different Lax pair and a new set of independent functions in involution.  Define the following Lax pair $(A, M)$ where

\begin{equation}
M =
\begin{pmatrix}
	b_1 &  a_1 & 0 & \cdots & \cdots & 0 \cr
	a_1 & b_2 & a_2 & \cdots &    & \vdots \cr
	0 & a_2 & b_3 & \ddots &  &  \cr
	\vdots & & \ddots &  \ddots &\ii a_m & \vdots \cr
	\vdots & & &\ii a_m & \ddots & a_{N-1} \cr
	0 & \cdots & & \cdots & a_{N-1} & b_N
\end{pmatrix}
                        \ ,
\end{equation}
i.e. we replace $a_m$ with $\ii  a_m$ where  $\ii^2=-1$.  We take $A$ to be the skew-symmetric part of $M$.  One verifies easily that the Lax pair is equivalent to the equations (\ref{dot}).  Therefore the functions
\bd I_i={1 \over i} {\rm tr} \, M^i,  \   \ i=1,2, \dots, N  \ed
are first integrals of the system (\ref{dot}).  As we have seen $I_1=H_1=b_1 +b_2+ \dots +b_N$ is the Casimir of the Poisson bracket $\pi_1$. Therefore  we may use  the set of integrals $\{I_2, I_3, \dots I_N \}$ as a different set of functions to be used to prove integrability.  Note that  $H_{2i}$  is equal to $I_{2i}$ . To prove that this set is in involution one can use the methods of \cite{damianou1}.   Recall that the first master symmetry of the classical  Toda lattice is
\begin{equation}
  \sum_{n=1}^{N-1} \alpha_n {\partial \ \over \partial a_n} +
   \sum _{n=1}^N \beta_n  {\partial \ \over \partial b_n} \ ,
 \end{equation}
  \smallskip
  \noindent
where
\begin{equation}
 \alpha_n =  - na_n b_n + (n+2) a_n b_{n+1 }
\end{equation}
\begin{equation}
\beta_n = (2n+3) a_n^2 + (1-2n) a_{n-1}^2 + b_n^2  \ .
\end{equation}

Replacing $a_m$ by $\ii a_m$ in the first equation does not alter the equation.    On the other hand  $\beta_m $ now changes to
\bd \beta_m=-(2m+3) a_m^2 + (2m-1)  a_{m-1}^2 + b_m^2 \ . \ed

With this change we obtain master symmetry $X_1$ which has the following properties.
\bd X_1 (H_j) =(j+1) H_{j+1} \ed
\bd {\mathcal L}_{X_1} \pi_j =(j-3) \pi_{j+1}  \ j=1,2,3  \ . \ed

One now establishes the integrability of system (\ref{dot}) following  \cite{damianou1, damianou2}. The proof of involution of invariants is exactly the same as in   \cite[Proposition~3]{damianou2}.

\section{Example  \so{6}{5}}
We illustrate in detail  the results with a specific example.

Using the root structure of  \so{6}{5} we define the following Lax pair.

The matrix $L$ is given by
\[\displaystyle
L\, = \, \left(
\begin {array}{cccccccccc} 0&-b_{{1}}&-a_{{1}}&0&0&0&0&0&0&0\\\noalign{\medskip}b_{{1}}&0&0&a_{{1}}&0&0&0&0&0&0\\\noalign{\medskip}a_{{1}}&0&0&b_{{2}}&a_{{2}}&0&0&0&0&0\\\noalign{\medskip}0&-a_{{1}}&-b_{{2}}&0&0&-a_{{2}}&0&0&0&0\\\noalign{\medskip}0&0&-a_{{2}}&0&0&-b_{{3}}&-a_{{3}}&0&0&0\\\noalign{\medskip}0&0&0&a_{{2}}&b_{{3}}&0&0&a_{{3}}&0&0\\\noalign{\medskip}0&0&0&0&-a_{{3}}&0&0&b_{{4}}&a_{{4}}&0\\\noalign{\medskip}0&0&0&0&0&a_{{3}}&-b_{{4}}&0&0&-a_{{4}}\\\noalign{\medskip}0&0&0&0&0&0&-a_{{4}}&0&0&-b_{{5}}\\\noalign{\medskip}0&0&0&0&0&0&0&a_{{4}}&b_{{5}}&0
\end {array} \right)
\]

or equivalently in Kronecker product form
\begin{equation*}
\begin{split}
L \quad =&\quad
	\begin{pmatrix}
	0 & -a _{1} & 0 & 0 & 0 \\
	a _{1} & 0 & a _{2} & 0 & 0 \\
	0 & -a _{2} & 0 & -a _{3} & 0 \\
	0 & 0 & -a _{3} & 0 & a _{4} \\
	0 & 0 & 0 & -a _{4} & 0
	\end{pmatrix}
\otimes
	\begin{pmatrix}
	1 & 0 \\
	0 & -1
	\end{pmatrix}
\\
&+
	\begin{pmatrix}
	b _{1} & 0 & 0 & 0 & 0 \\
	0 & -b _{2} & 0 & 0 & 0 \\
	0 & 0 & b _{3} & 0 & 0 \\
	0 & 0 & 0 & -b _{4} & 0 \\
	0 & 0 & 0 & 0 & b _{5}
	\end{pmatrix}
\otimes
	\begin{pmatrix}
	0 & -1 \\
	1 & 0
	\end{pmatrix} \ .
\end{split}
\end{equation*}

The matrix $B$ is the following:

\[\displaystyle
B\, = \, \left(
\begin {array}{cccccccccc} 0&0&0&a_{{1}}&0&0&0&0&0&0\\\noalign{\medskip}0&0&a_{{1}}&0&0&0&0&0&0&0\\\noalign{\medskip}0&-a_{{1}}&0&0&0&a_{{2}}&0&0&0&0\\\noalign{\medskip}-a_{{1}}&0&0&0&a_{{2}}&0&0&0&0&0\\\noalign{\medskip}0&0&0&-a_{{2}}&0&0&0&a_{{3}}&0&0\\\noalign{\medskip}0&0&-a_{{2}}&0&0&0&a_{{3}}&0&0&0\\\noalign{\medskip}0&0&0&0&0&a_{{3}}&0&0&0&a_{{4}}\\\noalign{\medskip}0&0&0&0&a_{{3}}&0&0&0&a_{{4}}&0\\\noalign{\medskip}0&0&0&0&0&0&0&-a_{{4}}&0&0\\\noalign{\medskip}0&0&0&0&0&0&-a_{{4}}&0&0&0
\end {array} \right)
\]

or equivalently
\begin{equation*}
B \quad =\quad
	\begin{pmatrix}
	0 & a _{1} & 0 & 0 & 0 \\
	-a _{1} & 0 & a _{2} & 0 & 0 \\
	0 & -a _{2} & 0 & a _{3} & 0 \\
	0 & 0 & a_{3} & 0 & a _{4} \\
	0 & 0 & 0 & -a _{4} & 0
	\end{pmatrix}
\otimes
	\begin{pmatrix}
	0 & 1 \\
	1 & 0
	\end{pmatrix} \ .
\end{equation*}

The Lax pair is equivalent to the following equations of motion:
\[\displaystyle
\begin {array}{rclccrcl} \dot{b_1}(t)&=&2\,{a_{1}}^{2}&&&\dot{a_1}(t)&=& (b_{2}-b_{1}) a_{1}\\\noalign{\medskip}\dot{b_2}(t)&=&2\,{a_{2}}^{2}-2\,{a_{1}}^{2}&&&\dot{a_2}(t)&=& (b_{3}-b_{2}) a_{2}\\\noalign{\medskip}\dot{b_3}(t)&=&-2\,{a_{3}}^{2}-2\,{a_{2}}^{2}&&&\dot{a_3}(t)&=&(b_{4}-b_{3}) a_{3}\\\noalign{\medskip}\dot{b_4}(t)&=&2\,{a_{4}}^{2}+2\,{a_{3}}^{2}&&&\dot{a_4}(t)&=& (b_{5}-b_{4}) a_{4}\\\noalign{\medskip}\dot{b_5}(t)&=&-2\,{a_{4}}^{2}&&&&&\\
\end {array}\]

The constants of motion are 		$H_{2i}=\frac{(-1)^i}{4i} \tr L^{2i} \ \  i=1,2,3,4$ which are sufficient to establish integrability since $H_1 = \sum_{j=1}^5 b_j$ is a Casimir for the $\pi_1$ bracket.
	
For example:
\begin{eqnarray*}
H_1 &=&
{b_1}+{b_2}+{b_3}+{b_4}+{b_5}
\\
H_2 &=&
{a_1}^{2}+{a_2}^{2}-{a_3}^{2}+{a_4}^{2}
+\frac{1}{2}(b_{1}^{2}+{b_2}^{2}+{b_3}^{2}+{b_4}^{2}+{b_5}^{2})
\\
H_4 &=&
{a_1}^{2}({b_1}^{2}+b_1b_2+{b_2}^{2})
+{a_2}^{2}({b_2}^{2}+b_2b_3+{b_3}^{2})
-{a_3}^{2}({b_3}^{2}+b_3b_4+{b_4}^{2})
\\&&
+{a_4}^{2}({b_4}^{2}+b_4b_5+{b_5}^{2})
+{a_1}^{2}{a_2}^{2}
-{a_2}^{2}{a_3}^{2}
-{a_3}^{2}{a_4}^{2}
\\&&
+1/2({a_1}^{4}+{a_2}^{4}+{a_3}^{4}+{a_4}^{4})
+1/4({b_1}^{4}+{b_2}^{4}+{b_3}^{4}+{b_4}^{4}+{b_5}^{4})
\end{eqnarray*}

Note that $H_2, H_4, H_6, H_8$ form a set of independent integrals of motion in involution demonstrating the integrability of the system.

We obtain another set of integrals in involution in the following way: Define the Lax pair $(M, A)$ where

\bd
M=
\begin{pmatrix}
	b_1 &  a_1 & 0 & 0 &  0 \cr
	a_1 & b_2 & a_2 & 0    &0 \cr
	0 & a_2 & b_3 & \ii a_3 &  0  \cr
	0 &0 &\ii a_3 & b_4&  a_4 \cr
	0 & 0 & 0 & a_4 & b_5
\end{pmatrix}
\ed
and

\bd
A=
\begin{pmatrix}
	0 &  a_1 & 0 & 0 &  0 \cr
	-a_1 &0 & a_2 & 0    &0  \cr
	0 &- a_2 &0 & \ii a_3 &  0  \cr
	0 &0 &-\ii a_3 & 0&  a_4 \cr
	0 & 0 & 0 &- a_4 &0
\end{pmatrix} \ .
\ed

Define: $I_1 =H_1$, $I_j = \frac{1}{j} X_1(I_{j-1})$

We have $X_1(H_1)=2H_2$. Define
\[ I_2 =\frac{1}{2} X_1(H_1)=H_2 \ . \]

Then define
\[ I_3=\frac{1}{3} X_1 (H_2)  \ , \]
and
\[ I_4=\frac{1}{4} X_1 (I_3) =H_4 \ . \]

Finally let
\[ I_5=\frac{1}{5} X_1 (I_4) \ . \]

Then the set $\{ I_2=H_2, I_3, I_4=H_4, I_5 \}$ is another set of independent integrals in involution.

Of course
\bd I_j=\frac{1}{j}\, {\rm tr}\  M^j  \ . \ed
For example
\bd I_3=\frac{1}{3} \sum_{j=1}^5 b_j^3+a_1^2(b_1+b_2)+a_2^2(b_2+b_3)-a_3^2 (b_3+b_4) +a_4^2 (b_4+b_5) \ . \ed

In addition we have $\{ I_j, H_k \}=0$ for $j, k=1,2,3,4$. Therefore the functions $H_6$ and $H_8$ must be functions of the $I_j$. For example
\[ H_6=\frac{1}{720} H_1^6 +H_1 I_5+H_2 H_4-\frac{1}{2} H_1^2 H_4+ \frac{1}{6} H_1^3 I_3-H_1 H_2 I_3-\frac{1}{24} H_1^4 H_2 -\frac{1}{6} H_2^3+\frac{1}{2}I_3^2+\frac{1}{4}H_1^2 H_2^2 \ . \]

The equations of motion can be also obtained using the
Lie-Poisson bracket $\pi_1$

\[\displaystyle
{\it \Gp_1}= \left(
\begin {array}{cccccccccc}
0&0&0&0&-a_{{1}}&a_{{1}}&0&0&0\\\noalign{\medskip}
0&0&0&0&0&-a_{{2}}&a_{{2}}&0&0\\\noalign{\medskip}
0&0&0&0&0&0&-a_{{3}}&a_{{3}}&0\\\noalign{\medskip}
0&0&0&0&0&0&0&-a_{{4}}&a_{{4}}\\\noalign{\medskip}
a_{{1}}&0&0&0&0&0&0&0&0\\\noalign{\medskip}
-a_{{1}}&a_{{2}}&0&0&0&0&0&0&0\\\noalign{\medskip}
0&-a_{{2}}&a_{{3}}&0&0&0&0&0&0\\\noalign{\medskip}
0&0&-a_{{3}}&a_{{4}}&0&0&0&0&0\\\noalign{\medskip}
0&0&0&-a_{{4}}&0&0&0&0&0
\end {array} \right)
\]
and the Hamiltonian $H_2$.

There is one Casimir for $\Gp_1$:
	$$H_1=b_1+b_2 + \dots +b_5.$$

The quadratic  Poisson bracket is defined by:
\[\displaystyle {\it \Gp_2}=
\left(
\begin {array}{cccccccccc}
0&\frac{a_{1}a_{2}}{2}&0&0&-a_{1}b_{1}&a_{1}b_{2}&0&0&0\\\noalign{\medskip}
-\frac{a_{1}a_{2}}{2}&0&\frac{a_{2}a_{3}}{2}&0&0&-a_{2}b_{2}&a_{2}b_{3}&0&0\\\noalign{\medskip}
0&-\frac{a_{2}a_{3}}{2}&0&\frac{a_{3}a_{4}}{2}&0&0&-a_{3}b_{3}&a_{3}b_{4}&0\\\noalign{\medskip}
0&0&-\frac{a_{3}a_{4}}{2}&0&0&0&0&-a_{4}b_{4}&a_{4}b_{5}\\\noalign{\medskip}
a_{1}b_{1}&0&0&0&0&2\,{a_{1}}^{2}&0&0&0\\\noalign{\medskip}
-a_{1}b_{2}&a_{2}b_{2}&0&0&-2\,{a_{1}}^{2}&0&2\,{a_{2}}^{2}&0&0\\\noalign{\medskip}
0&-a_{2}b_{3}&a_{3}b_{3}&0&0&-2\,{a_{2}}^{2}&0&-2\,{a_{3}}^{2}&0\\\noalign{\medskip}
0&0&-a_{3}b_{4}&a_{4}b_{4}&0&0&2\,{a_{3}}^{2}&0&2\,{a_{4}}^{2}\\\noalign{\medskip}
0&0&0&-a_{4}b_{5}&0&0&0&-2\,{a_{4}}^{2}&0
\end {array}
\right)
\]

 There is one Casimir for $\Gp_2$:
\begin{eqnarray*}
\sqrt{{\rm det} \, L} &=&
- b_{1} b_{2} b_{3} b_{4} b_{5}
+ b_{1} b_{2} b_{3} a_{4}^{2}
- b_{1} b_{2} b_{5} a_{3}^{2}
+ b_{1} b_{4} b_{5} a_{2}^{2}
- b_{1} a_{2}^{2} a_{4}^{2}
\\&&
+ b_{3} b_{4} b_{5} a_{1}^{2}
- b_{3} a_{1}^{2} a_{4}^{2}
+ b_{5} a_{1}^{2} a_{3}^{2}=-{\rm det} M
\end{eqnarray*}

The formulas for the master symmetries $X_1$ and $X_2$ are the following:

\[\displaystyle
X_{1} =
\left(
\begin{array}{c}
\frac{1}{2}\,a_{1} \left( 5\,b_{1}-b_{2} \right) \\\noalign{\medskip}
\frac{1}{2}\,a_{2} \left( 3\,b_{2}+b_{3} \right) \\\noalign{\medskip}
\frac{1}{2}\,a_{3} \left( 3\,b_{4}+b_{3} \right) \\\noalign{\medskip}
-\frac{1}{2}\,a_{4} \left( b_{4}-5\,b_{5} \right) \\\noalign{\medskip}
-2\,{a_{1}}^{2}+{b_{1}}^{2}\\\noalign{\medskip}
4\,{a_{1}}^{2}+{b_{2}}^{2}\\\noalign{\medskip}
2\,{a_{2}}^{2}-2\,{a_{3}}^{2}+{b_{3}}^{2}\\\noalign{\medskip}
4\,{a_{4}}^{2}+{b_{4}}^{2}\\\noalign{\medskip}
-2\,{a_{4}}^{2}+{b_{5}}^{2}
\end{array}
\right)
\]

\[\displaystyle
X_{2} =
\left(
\begin {array}{c}
\frac{1}{2}\,a_{1} \left( 2\,b_{1}b_{2}+b_{4}b_{1}+b_{5}b_{1}-b_{3}b_{2}-b_{4}b_{2}-b_{5}b_{2}+b_{3}b_{1}+5\,{b_{1}}^{2}-{b_{2}}^{2}+2\,{a_{1}}^{2} \right) \\\noalign{\medskip}
\frac{1}{2}\,a_{2} \left( 3\,{b_{2}}^{2}-b_{1}b_{2}+2\,b_{3}b_{2}+b_{4}b_{2}+b_{5}b_{2}+4\,{a_{1}}^{2}+2\,{a_{2}}^{2}-2\,{a_{3}}^{2}+{b_{3}}^{2}+b_{3}b_{1}-b_{4}b_{3}-b_{5}b_{3} \right) \\\noalign{\medskip}
\frac{1}{2}\,a_{3} \left( -b_{3}b_{1}-b_{3}b_{2}+2\,b_{4}b_{3}+b_{5}b_{3}+4\,{a_{4}}^{2}+3\,{b_{4}}^{2}+b_{4}b_{1}+b_{4}b_{2}-b_{5}b_{4}+2\,{a_{2}}^{2}-2\,{a_{3}}^{2}+{b_{3}}^{2} \right) \\\noalign{\medskip}
\frac{1}{2}\,a_{4} \left( b_{5}b_{3}+b_{5}b_{2}+b_{5}b_{1}-b_{4}b_{3}-b_{4}b_{2}-b_{4}b_{1}+2\,b_{5}b_{4}+2\,{a_{4}}^{2}-{b_{4}}^{2}+5\,{b_{5}}^{2} \right) \\\noalign{\medskip}
-2\,b_{2}{a_{1}}^{2}-b_{1}{a_{1}}^{2}-{a_{1}}^{2}b_{3}-{a_{1}}^{2}b_{4}-{a_{1}}^{2}b_{5}+{b_{1}}^{3}\\\noalign{\medskip}
4\,b_{1}{a_{1}}^{2}+5\,b_{2}{a_{1}}^{2}+{a_{1}}^{2}b_{3}+{a_{1}}^{2}b_{4}+{a_{1}}^{2}b_{5}+{a_{2}}^{2}b_{1}+b_{2}{a_{2}}^{2}-{a_{2}}^{2}b_{4}-{a_{2}}^{2}b_{5}+{b_{2}}^{3}\\\noalign{\medskip}
2\,b_{2}{a_{2}}^{2}-{a_{2}}^{2}b_{1}+3\,b_{3}{a_{2}}^{2}+{a_{2}}^{2}b_{4}+{a_{2}}^{2}b_{5}-2\,b_{4}{a_{3}}^{2}-{a_{3}}^{2}b_{1}-{a_{3}}^{2}b_{2}-3\,b_{3}{a_{3}}^{2}+{a_{3}}^{2}b_{5}+{b_{3}}^{3}\\\noalign{\medskip}
{a_{3}}^{2}b_{1}+{a_{3}}^{2}b_{2}-b_{4}{a_{3}}^{2}-{a_{3}}^{2}b_{5}+4\,b_{5}{a_{4}}^{2}+{a_{4}}^{2}b_{1}+{a_{4}}^{2}b_{2}+{a_{4}}^{2}b_{3}+5\,b_{4}{a_{4}}^{2}+{b_{4}}^{3}\\\noalign{\medskip}
-2\,b_{4}{a_{4}}^{2}-{a_{4}}^{2}b_{1}-{a_{4}}^{2}b_{2}-{a_{4}}^{2}b_{3}-b_{5}{a_{4}}^{2}+{b_{5}}^{3}
\end{array}
\right)
\]

\section{Generalizations}
We conclude with two possible generalizations of the systems considered.

There is also a periodic version of the system which is also integrable.
To obtain it we make some simple modifications in the various definitions.  In this section we indicate briefly the modifications needed to obtain the periodic version of the system.

The last root, $\Ga_N$, is given by:
	\be
	\Ga_N(h_{\Ga_j}) =
	\left\{
	\begin{array}{rll}
		     \ii& & j = 1,\\
		(-1)^N\; \ii& & j = N,\\
		0&      & \textrm{otherwise}.
	\end{array}
	\right.
	\ee
Alternatively the simple roots correspond to vectors of size $N$ as follows:
\be
\begin{array}{rcl}
	\Ga_{1} &=& (-\ii,-\ii,0, \dots, 0),\\
	\Ga_{2} &=& (0,\ii,\ii,0, \dots, 0),\\
	\Ga_{3} &=& (0,0,-\ii,-\ii,0, \dots, 0),\\
	    &\vdots&\\
	\Ga_N &=& (\ii,0, \dots, 0,(-1)^N\ii).
\end{array}
\ee

 Again  we obtain a Lax equation $\dot L(t) = [B(t), L(t)]$  which is consistent and gives the following equations for $1 \leqslant i \leqslant N$:
\be \label{dot2}
\begin{array}{lcl}
 \dot a_i	&=& a_i \, (b_{i+1} -b_i ), \quad \textrm{ for } i \neq N,\\
 \dot a_N	&=& a_N \, (b_1 -b_N ),\\[2ex]
 \dot b_i	&=& 2 \, ( a_i^2 - a_{i-1}^2 ), \quad \textrm{ for } i \nin \{1,m,m+1,N\},\\
 \dot b_1	&=& 2 \, ( a_1^2 + a_{N}^2 )\\
 \dot b_m	&=& -2 \, ( a_m^2 + a_{m-1}^2 )\\
 \dot b_{m+1} &=& 2 \, ( a_{m+1}^2 + a_{m}^2 )\\
 \dot b_N	&=& -2 \, ( a_N^2 + a_{N-1}^2 ) \ .
\end{array}
\ee

In the periodic case in  ${\bf (q,p)}$ coordinates we use the Hamiltonian

\be
	H(\mathbf{q},\mathbf{p}) =
		\sum_{i=1}^N \, \frac{1}{2} \, p_i^2 +
		\left(\sum _{i=1, i \neq m}^{N-1} \, e^{ q_i-q_{i+1}}\right) - e^{ q_m-q_{m+1}} -e^{ q_N-q_{1}} \ ,
		\label{a2}
\ee
which is slightly different from the classical, periodic Toda lattice.
(The minus sign at position $m$ and the last term.)
In order to obtain equations (\ref{dot2}), in the variables ${\bf (a,b)}$, we extend the Flaschka type transformation given in (\ref{FlaschkaTransformation}) by adding a new variable $a_N$.
\be
	a_N = \frac{1}{2}e^{ \frac{1}{2} (q_N-q_1)}. \label{FlaschkaTransformationPeriodic}
\ee

The Lax pair $\dot L = [B, L]$  can be written using the Kronecker product as follows:
\be \label{new L periodic}
 L= L_1 \otimes
  \begin{pmatrix}
  0 & -1 \cr
  1 & 0
  \end{pmatrix}
  +
  L_2 \otimes
  \begin{pmatrix}
  1 & 0 \cr
  0 & -1
  \end{pmatrix}
  +
  L_3 \otimes
  \begin{pmatrix}
  1 &  0 \cr
  0  & (-1)^{N+1}
  \end{pmatrix}
\ee
where $L_1$ and $L_2$ are as in (\ref{new L1}) and (\ref{new L2}) respectively and

\be \label{new L3 periodic}
L_3 =
\begin{pmatrix}
  0       & \cdots &    & \cdots & 0    & (-1)^{N}a_N \cr
  \vdots    & \ddots &    &    &     & 0       \cr
         &    &    &    &     & \vdots    \cr
  \vdots    &    &    &    &     &        \cr
  0       &    &    &    & \ddots  & \vdots    \cr
  (-1)^{N}a_N & 0   & \cdots &    & \cdots  & 0
\end{pmatrix}
\ .
\ee

The matrix $B$ is defined as follows:
\be \label{new B periodic}
 B= B_1 \otimes
  \begin{pmatrix}
  0 & 1 \cr
  1 & 0
  \end{pmatrix}
  +
  B_2 \otimes
  \begin{pmatrix}
  0 & 1 \cr
  (-1)^N & 0
  \end{pmatrix}
\ee
where $B_1$ is as in (\ref{new B1}) and

\be \label{new B2 periodic}
B_2 =
\begin{pmatrix}
  0     & \cdots &    & \cdots & 0    & a_N   \cr
  \vdots  & \ddots &    &    &     & 0    \cr
       &    &    &    &     & \vdots \cr
  \vdots  &    &    &    &     &     \cr
  0     &    &    &    & \ddots  & \vdots \cr
  (-1)^Na_N & 0   & \cdots &    & \cdots  & 0
\end{pmatrix} \ .
\ee

There is also a different Lax pair $(M, A)$

\begin{equation}
M =
\begin{pmatrix}
	b_1 &  a_1 & 0 & \cdots & \cdots &\ii a_N \cr
	a_1 & b_2 & a_2 & \cdots &    & \vdots \cr
	0 & a_2 & b_3 & \ddots &  &  \cr
	\vdots & & \ddots &  \ddots &\ii a_m & \vdots \cr
	\vdots & & & \ii a_m & \ddots & a_{N-1} \cr
	\ii a_N & \cdots & & \cdots & a_{N-1} & b_N
\end{pmatrix}
                        \ ,
\end{equation}
i.e. we replace $a_m$ with $\ii a_m$.  We define

 \begin{equation}
A =
\begin{pmatrix}
	0 &  a_1 & 0 & \cdots & \cdots & -\ii a_N \cr
	-a_1 & 0 & a_2 & \cdots &    & \vdots \cr
	0 & -a_2 &0 & \ddots &  &  \cr
	\vdots & & \ddots &  \ddots &\ii a_m & \vdots \cr
	\vdots & & & -\ii a_m & \ddots & a_{N-1} \cr
	\ii a_N & \cdots & & \cdots &- a_{N-1} &0
\end{pmatrix}
                        \ .
\end{equation}

One verifies easily that the Lax pair is equivalent to the equations (\ref{dot2}).

Note that the Hamiltonian has one more term in $(a,b)$ coordinates i.e.
\be \label{H2 periodic}
H_2	= H(q_1, \dots, q_N, \, p_1, \dots, p_N)
	= \frac{1}{2}\sum_{i=1}^N b_i^2+ \left(\sum_{i=1, i \neq m}^{N-1} a_i^2 \right)-a_m^2 -a_N^2
\ee

The Poisson bracket $\Gp_1$ for the periodic case is defined as in (\ref{poisson1}) with the addition of
\be
	\{a_N, b_1 \} = a_N. \label{poisson p1 periodic}
\ee

There are two Casimirs
	$H_1=b_1+b_2 + \dots +b_N$ and
	$a_1 a_2 \cdots a_N$.

There is a quadratic Toda bracket $\Gp_2$ with defining relations as in (\ref{poisson p2}) with the addition of
\be \label{poisson p2 periodic}
\begin{array}{rcll}
	\{a_N, a_1 \}		&=& -\frac{1}{2} a_N a_1 \ ,\\
	\{a_N, b_N \}		&=& -a_N b_N \ ,\\
	\{a_N, b_1 \}		&=& a_N b_1 \ ,\\
	\{b_N, b_1 \}		&=& -2\, a_N^2 \ ;
\end{array}
\ee

This bracket also has two Casimirs:
	${\rm det} \, L$ and $a_1 a_2 \cdots a_N$.
	
$H_1 = b_1 + \dots b_N$ is the Hamiltonian.

Furthermore, $\pi_2$ is compatible with $\pi_1$.
We also have the Lenard relation
\be
\pi_1 \nabla H_2 = \pi_2 \nabla H_1 \ . \label{a98 periodic}
\ee

We can generalize these systems in a different way by choosing different signs in the potential terms.
Tomei in \cite{tomei} studies the topology of the set of real, symmetric, tridiagonal $n\times n$ matrices with fixed eigenvalues
\bd \lambda_1 > \lambda_2 > \dots > \lambda_N \ .  \ed
This set consists of $2^{N-1}$  components depending on the signs ($\pm$) of the variables $a_1, a_2, \dots, a_N$. We may construct a Toda system  corresponding to each component and any two of them are isomorphic via a complex transformation. In this paper we have  studied in detail one such example but in general the number of such systems is $2^{N-1}$. It is also possible to produce two Lax pairs for each system. The first one is obtained from the Flaschka Lax pair by replacing some of the $a_j$ by $\ii a_j$. The second Lax pair which is obtained in $so(p,q)$ is also easily obtained by modifying some of the signs of (\ref{new L}).
We illustrate with an example for $N=3$.
\begin{example}
In this example we display all possible Toda systems in the case of $N=3$.  We give a Lax pair and the equations of motion for each system. For the last system we indicate also the second Lax pair which is obtained by an obvious change of signs in (\ref{new L}).
\begin{equation*}
L_1  =
	\begin{pmatrix}
	b _{1} & a _{1} & 0 \\
	a _{1} & b _{2} & a _{2}  \\
	0 & a _{2} &  b _{3} \\
	\end{pmatrix}
\quad \quad
B_1  =
	\begin{pmatrix}
	0 & a _{1} & 0 \\
	-a _{1} & 0 & a _{2}  \\
	0 & -a _{2} &  0 \\
	\end{pmatrix}
\end{equation*}
Equations of motion:
\[\displaystyle
\begin {array}{rclccrcl}
\dot{b_1}(t)&=&2\,(a_{1}^{2})&&&
\dot{a_1}(t)&=& (b_{2}-b_{1}) a_{1}\\\noalign{\medskip}
\dot{b_2}(t)&=&2\,(a_{2}^{2}-a_{1}^{2})&&&
\dot{a_2}(t)&=& (b_{3}-b_{2}) a_{2}\\\noalign{\medskip}
\dot{b_3}(t)&=&2\,(-a_{2}^{2})
\end {array}\]
\end{example}

\begin{example}
\begin{equation*}
L_2  =
	\begin{pmatrix}
	b _{1} & a _{1} & 0 \\
	a _{1} & b _{2} & \ii a _{2}  \\
	0 & \ii a _{2} &  b _{3} \\
	\end{pmatrix}
\quad \quad
B_2  =
	\begin{pmatrix}
	0 & a _{1} & 0 \\
	-a _{1} & 0 & \ii a _{2}  \\
	0 & -\ii a _{2} &  0 \\
	\end{pmatrix}
\end{equation*}
Equations of motion:
\[\displaystyle
\begin {array}{rclccrcl}
\dot{b_1}(t)&=&2\,(a_{1}^{2})&&&
\dot{a_1}(t)&=& (b_{2}-b_{1}) a_{1}\\\noalign{\medskip}
\dot{b_2}(t)&=&2\,(-a_{2}^{2}-a_{1}^{2})&&&
\dot{a_2}(t)&=& (b_{3}-b_{2}) a_{2}\\\noalign{\medskip}
\dot{b_3}(t)&=&2\,(a_{2}^{2})
\end {array}\]
\end{example}

\begin{example}
\begin{equation*}
L_3  =
	\begin{pmatrix}
	b _{1} & \ii a _{1} & 0 \\
	\ii a _{1} & b _{2} & a _{2}  \\
	0 & a _{2} &  b _{3} \\
	\end{pmatrix}
\quad \quad
B_3  =
	\begin{pmatrix}
	0 & \ii a _{1} & 0 \\
	-\ii a _{1} & 0 & a _{2}  \\
	0 & -a _{2} &  0 \\
	\end{pmatrix}
\end{equation*}
Equations of motion:
\[\displaystyle
\begin {array}{rclccrcl}
\dot{b_1}(t)&=&2\,(-a_{1}^{2})&&&
\dot{a_1}(t)&=& (b_{2}-b_{1}) a_{1}\\\noalign{\medskip}
\dot{b_2}(t)&=&2\,(a_{2}^{2}+a_{1}^{2})&&&
\dot{a_2}(t)&=& (b_{3}-b_{2}) a_{2}\\\noalign{\medskip}
\dot{b_3}(t)&=&2\,(-a_{2}^{2})
\end {array}\]
\end{example}

\begin{example}
\begin{equation*}
L_4  =
	\begin{pmatrix}
	b _{1} & \ii a _{1} & 0 \\
	\ii a _{1} & b _{2} & \ii a _{2}  \\
	0 & \ii a _{2} &  b _{3} \\
	\end{pmatrix}
\quad \quad
B_4  =
	\begin{pmatrix}
	0 & \ii a _{1} & 0 \\
	-\ii a _{1} & 0 & \ii a _{2}  \\
	0 & -\ii a _{2} &  0 \\
	\end{pmatrix}
\end{equation*}
Equations of motion:
\[\displaystyle
\begin {array}{rclccrcl}
\dot{b_1}(t)&=&2\,(-a_{1}^{2})&&&
\dot{a_1}(t)&=& (b_{2}-b_{1}) a_{1}\\\noalign{\medskip}
\dot{b_2}(t)&=&2\,(-a_{2}^{2}+a_{1}^{2})&&&
\dot{a_2}(t)&=& (b_{3}-b_{2}) a_{2}\\\noalign{\medskip}
\dot{b_3}(t)&=&2\,(a_{2}^{2})
\end {array}\]
\end{example}
Alternative Lax pair for the last example:
\[\displaystyle
L_4'\, : \, \left(
\begin{array}{cccccc}
0&-b_1&-a_1&0&0&0\\\noalign{\medskip}
b_1&0&0&a_1&0&0\\\noalign{\medskip}
-a_1&0&0&b_2&a_2&0\\\noalign{\medskip}
0&a_1&-b_2&0&0&-a_2\\\noalign{\medskip}
0&0&a_2&0&0&-b_3\\\noalign{\medskip}
0&0&0&-a_2&b_3&0\\\noalign{\medskip}
\end{array}
\right)
\quad
B_4'\, : \, \left(
\begin {array}{cccccccccc} 0&0&0&a_1&0&0\\\noalign{\medskip}
0&0&a_1&0&0&0\\\noalign{\medskip}
0&a_1&0&0&0&a_2\\\noalign{\medskip}
a_1&0&0&0&a_2&0\\\noalign{\medskip}
0&0&0&a_2&0&0\\\noalign{\medskip}
0&0&a_2&0&0&0\\\noalign{\medskip}
\end {array} \right)
\]

\end{document}